\newtheorem{theorem}{Theorem}
\newtheorem{lemma}[theorem]{Lemma}
\newtheorem{example}[theorem]{Example}
\newtheorem{remark}[theorem]{Remark}
\newcommand{\deff}{\mbox{$\stackrel{\rm def}{=}$}}
\newcommand{\cB}{{\mathcal B}}
\newcommand{\cC}{{\mathcal C}}
\newcommand{\cG}{{\mathcal G}}
\newcommand{\cS}{{\mathcal S}}
\newcommand{\cP}{{\mathcal P}}
\newcommand{\cX}{{\mathcal X}}
\newcommand{\sP}{\cP}
\newcommand{\sG}{\cG}
\newcommand{\Gr}{\smash{{\sG\kern-1.5pt}_q\kern-0.5pt(n,k)}}
\newcommand{\Gk}{\smash{{\sG\kern-1.5pt}_q\kern-0.5pt(n,k_1)}}
\newcommand{\Gkk}{\smash{{\sG\kern-1.5pt}_q\kern-0.5pt(n,k_2)}}
\newcommand{\Grtwo}{\smash{{\sG\kern-1.5pt}_2\kern-0.5pt(n,k)}}
\newcommand{\Gkone}{\smash{{\sG\kern-1.5pt}_q\kern-0.5pt(n,k_1)}}
\newcommand{\Gktwo}{\smash{{\sG\kern-1.5pt}_q\kern-0.5pt(n,k_2)}}
\newcommand{\Ps}{\smash{{\sP\kern-2.0pt}_q\kern-0.5pt(n)}}
\begin{document}
\title{Optimal Combinatorial Batch Codes\\ based on Block Designs\let\thefootnote\relax\footnotetext{A preliminary version of the paper is available at \emph{http://arxiv.org/abs/1312.5505}}}

\author{Natalia Silberstein and Anna G\'al\thanks{N. Silberstein is with the Department of Computer Science,
  Technion --- Israel Institute of Technology, Israel,
  email: natalys@cs.technion.ac.il. Supported in part by a Fine Fellowship.
  A. G\'al is with the Department of Computer Science,
  University of Texas at Austin, Austin, Tx, USA
  email: panni@cs.utexas.edu.
Supported in part by NSF Grant CCF-1018060.
 }}

\maketitle

\begin{abstract}
Batch codes, introduced by Ishai, Kushilevitz, Ostrovsky and Sahai,
represent the distributed storage of an $n$-element data set on $m$ servers
in such a way that any batch of $k$ data items can be retrieved by reading at most one
(or more generally, $t$) items from each server, while keeping the total storage
over $m$ servers equal to~$N$.
This paper considers a class of batch codes (for $t=1$), called combinatorial batch codes (CBCs),
where each server stores a subset of a database.
A  CBC
is called optimal if the total storage $N$ is minimal for given $n,m$, and $k$.
A $c$-uniform CBC is a combinatorial batch code where each
item is stored in exactly $c$ servers. A $c$-uniform CBC is called optimal if its
parameter $n$ has maximum value for  given $m$ and $k$. Optimal $c$-uniform
CBCs have been known
only for $c\in \{2,k-1,k-2\}$.

In this paper we present new constructions of optimal CBCs
in both the uniform and general settings,
for values of the
parameters where tight bounds have not been established previously.
In the uniform setting, we provide  constructions of two new families of
optimal uniform codes with $c\sim \sqrt{k}$. Our constructions are based on
affine planes and transversal designs.
\end{abstract}

\textbf{Keywords: }{Batch codes; Transversal designs; Affine planes. }\textbf{MSC 2010: }{94B60; 94C30}

\section{Introduction}


Batch codes were
introduced by Ishai, Kushilevitz, Ostrovsky and Sahai~\cite{IKOS04}.
An $(n, N, k, m, t)$ \emph{batch code} over an alphabet $\Sigma$,
  encodes $x\in \Sigma^n$ into an $m$-tuple of strings
$y_1,\ldots, y_m\in \Sigma^*$ of total length $N$ (stored in $m$ servers),
such that for every $k$-tuple (batch) of indices $i_1,\ldots, i_k\in [n]$,
the $k$ data items $x_{i_1},\ldots, x_{i_k}$ can be retrieved by reading
at most $t$ symbols  from each server.
Batch codes were motivated by applications to load balancing in distributed
storage, private information retrieval and cryptographic protocols.
It is desirable to minimize the total storage $N$ used to store a data set of size $n$, or, equivalently, to maximize the rate of the code, defined as the ratio $n/ N$. Also, it is desirable to have the number of servers $m$ as small as possible, given the parameters $k,t$ and $n$.

{\bf Combinatorial Batch Codes:}
The name \emph{combinatorial batch codes} was proposed by
Paterson, Stinson, and Wei~\cite{PSW08} to refer to
purely replication based batch codes.
Combinatorial batch codes
is the class of batch codes,
where each server stores a subset of data items and decoding simply means reading items from servers.
An  $(n,N,k,m,t)$-CBC
is a combinatorial batch code storing $n$ data items
on $m$ servers with total storage size $N$, such that
any $k$ data items can be retrieved by reading
at most $t$ items from each server.
An  $(n,N,k,m,t)$-CBC
is called optimal if the total storage $N$ is minimal for given $n,m$, and $k$.
An $(n, N, k, m, t=1)$-CBC is denoted by $(n, N, k, m )$-CBC and the
minimal value of $N$ for $t=1$ is denoted by $N(n,k,m)$.

In this paper we consider only the case $t=1$. (Combinatorial batch codes for $t>1$ have been studied in \cite{BuTu12,RuRo08}.)
Note that when $t=1$,
we can assume that $n\geq m\geq k$.
As noted in \cite{IKOS04}, there are two trivial extreme solutions to the
problem for $t=1$:
replicating the data string $x$  in each server gives a code with $m=k$
(which is lowest possible since $m\geq k$ must hold), but the total storage
used is $kn$.
For the case $m=n$ every server simply stores one data item,
so the total storage used is $n$ which is optimal,
while the number of servers is very large.

It has been observed already in \cite{IKOS04} that combinatorial batch codes
for $t=1$ are equivalent to
(unbalanced) bipartite expander graphs with expansion factor 1.
Expander graphs have been extensively studied, but known probabilistic
and explicit constructions of bipartite expanders do not give
optimal combinatorial batch codes.
Note that $N \geq n$ has to hold (and this is tight when $m=n$),
and $N \leq kn - m(k-1)$ by a simple construction
which is optimal when $m=k$ \cite{PSW08}.
However, for  a certain range of parameters
optimal constructions and tight bounds on $N(n,k,m)$ are not
known even up to constant factors.

Precise values of $N(n,k,m)$ have been established for some special
settings of the parameters, and more generally when $n \geq \binom{m}{k-2}$.
For  fixed  $k\in\{2,3,4\}$, the values of $N(n,m,k)$ are presented
in~\cite{PSW08} and ~\cite{BuTu11Dual}.
The results  for general $k$ where precise values of
$N(n,k,m)$ and constructions of CBCs that achieve these bounds have been
established are summarized in the following table.

\vspace{.3cm}
\noindent
\begin{tabular}{|c|c|c|c|}
  \hline
  $n$  & $m$ & $N(n,k,m)$ & references\\ \hline \hline
  $n$  & $m=n$ & $n$ &~\cite{IKOS04,PSW08}\\\hline
  $n$ & $m=k$ & $kn-k(k-1)$& ~\cite{IKOS04,PSW08}\\\hline
  $n=m+1$  & $m$ & $m+k$& \cite{PSW08}\\\hline
  $n=m+2$  & $m$ & $\left\{\begin{array}{cc}
                        m+k-2+\lceil2\sqrt{k+1}\rceil & \textmd{if } m+1-k\geq \lceil\sqrt{k+1}\rceil \\
                        2m-2 +\lceil1+\frac{k+1}{m-k+1}\rceil & \textmd{if } m+1-k < \lceil\sqrt{k+1}\rceil
                      \end{array}\right.$&\cite{BKMS10,BuTu11Dual}
     \\\hline
 $n\geq (k-1)\binom{m}{k-1}$  & $m$ & $kn-(k-1)\binom{m}{k-1}$&\cite{PSW08} \\\hline
 $\binom{m}{k-2}\leq n\leq (k-1)\binom{m}{k-1}$ & $m$ & $n(k-1)-\left\lfloor\frac{(k-1)\binom{m}{k-1}-n}{m-k+1}\right\rfloor$&\cite{RuRo08, BRR11, BuTu11} \\\hline
  \hline
\end{tabular}

\vspace{.3cm}

The last bound was generalized in~\cite{BRR11} for any $1\leq n\leq (k-1)\binom{m}{k-1}$ as follows:
Let $1\leq s\leq k-1$ be  the least integer such that
\[
n\leq \frac{(k-1)\binom{m}{s}}{\binom{k-1}{s}}.
\]
Then
\begin{equation}
\label{eq:boundN}
N(n,k,m)\geq ns-\left\lfloor\frac{(k-s)\left(\frac{(k-1)\binom{m}{s}}{\binom{k-1}{s}}-n\right)}{m-k+1}\right\rfloor.
\end{equation}

Bound~(\ref{eq:boundN}) is attained by the construction given in~\cite{BRR11}
  for half of the values of $n$ in the range
$\binom{m}{k-2}-(m-k+1)A(m,4,k-3)\leq n\leq \binom{m}{k-2}$, where $A(m,4,k-3)$ is the maximum number
of codewords in a binary constant weight code of length $m$, weight $k-3$
and Hamming distance~$4$.
The question of the tightness of bound~(\ref{eq:boundN}) for $n<\binom{m}{k-2}-(m-k+1)A(m,4,k-3) $ remained open.

{\bf Uniform Combinatorial Batch Codes:}
A $c$-\emph{uniform} $(n,cn,k,m)$-CBC is a combinatorial batch code where each
item is stored in exactly $c$ servers.  These codes were studied in~\cite{PSW08,BRR11,BaBh12,BuTu13}. The maximum value of $n$ for which there
exists a $c$-uniform $(n,cn,k,m)$-CBC is denoted by $n(m,c,k)$.
In other words, $n(m,c,k)$ is the maximum number of data items that can be stored using a uniform code with the given number $m$ of servers and the given $c$ and $k$. The following general upper bound on $n(m,c,k)$ was established in
~\cite{PSW08}:
\begin{equation}
\label{eq:UpBoundUniform}
n(m,c,k)\leq \frac{(k-1)\binom{m}{c}}{\binom{k-1}{c}}.
\end{equation}

It was shown in~\cite{PSW08} by probabilistic arguments that $n(m,c,k)=\Omega({m^{\frac{ck}{k-1}-1}})$ for fixed integers $k$
and $c$ (the constants in the $\Omega$ notation depend on $k$ and $c$).
Explicit constructions of uniform codes which attain bound~(\ref{eq:UpBoundUniform}) were given for $c\in\{k-1, k-2\}$ in ~\cite{PSW08}.
It was proved in~\cite{BaBh12} that
$n(m,c=2,k=5)=\left\lfloor\frac{m^2}{4}\right\rfloor$. A construction of codes based on complete bipartite graphs given in~\cite{PSW08} attains this bound.
Based on the connection between uniform combinatorial batch codes and the forbidden hypergraph problem~\cite{BES73} the following bounds for uniform codes were shown in~\cite{BaBh12}, however, these bounds only hold when $k$ and $c$
are fixed integers
(the constants hidden in the notation depend on $k$ and $c$):
\begin{itemize}
  \item $n(m,2,k)=O(m^{1+\frac{1}{\lfloor k/4\rfloor}})$, for $k\geq 4$;
  \item $n(m,2,k)=\Theta(m^{3/2})$, for $k=6,7,8$;
  \item $n(m,2,k)=\Theta(m^{4/3})$, for $k=9,10,11$;
  \item $n(m,2,k)=\Theta(m^{6/5})$, for $k=15,16,17$;
  \item $n(m,c,k)=o(m^c)$, for $k\geq 7$, $3\leq c\leq k-1-\lceil\log{k}\rceil$;
   \item $n(m,c,k)=\Theta(m^c)$, for $k\geq 6$, $k-\lceil\log{k}\rceil\leq c\leq k-1$.
\end{itemize}

The following bound was proved in~\cite{BuTu13}:
$$n(m,c,k)=O(m^{c-1+1/\left\lfloor\frac{k}{c+1}\right\rfloor}),$$
for $c\leq \frac{k}{2}-1$. This improves the general bound
(\ref{eq:UpBoundUniform}) when $k$ and $c$ are fixed integers,
but it is weaker than (\ref{eq:UpBoundUniform}) in the general case.
Note that optimal constructions of uniform CBCs for $k=3$ and $k=4$
are implied by the bounds of \cite{PSW08} when $c=2$.
However, bounds tight up to constant factors for the value of $n(m,c,k)$
are not known even for $c=2$ when  $k \geq 18$.

{\bf Our results:}
In this paper we give several constructions of optimal combinatorial batch
codes for settings of the parameters where tight bounds
were not known previously, even up to constant factors.

We answer
the question about the tightness of bound~(\ref{eq:boundN})
affirmatively,
for certain settings of the parameters in the range
$n<\binom{m}{k-2}-(m-k+1)A(m,4,k-3) $.
We construct a family of  CBCs that attain bound~(\ref{eq:boundN})  with $n<\binom{m}{k-2}-(m-k+1)A(m,4,k-3)$. More precisely, given a prime power $q$, we construct an optimal $(n,N,k,m)$-CBC with $n=q^2+q-1$, $N=q^3-q$, $k=q^2-q-1$, and $m=q^2-q$. This construction is based on a family of block designs,
called transversal designs.

Regarding uniform batch codes, we present two new families of optimal uniform
combinatorial batch codes which attain bound~(\ref{eq:UpBoundUniform})
for $k=c^2$ and $k=c^2+c+1$. Previously, optimal uniform CBCs were
known only for $c\in \{2,k-1,k-2\}$.
Our first
optimal uniform construction is based on affine planes
while the second one is based on transversal designs.

Our proofs are based on  the observation that the strong structural properties
of affine planes and transversal designs are well suited to obtain
CBCs with good parameters. In fact, we show that simply taking the incidence
matrix of affine planes yields optimal uniform CBCs.
Moreover, the proof of this result is fairly short and simple
(see Theorem \ref{trm:affine_plane}).
Transversal designs have similar structure to affine planes,
and we show that they can be modified to obtain optimal CBCs in both the
uniform and non-uniform settings.
However, to get tight results, the proofs based on transversal designs
are somewhat longer than the simple proof for affine planes.

To summarize our results, we note that
the following optimal $(n,N,k,m)$-CBCs are constructed in the paper:
\begin{itemize}
  \item non-uniform $(q^2+q-1,q^3-q,q^2-q-1,q^2-q)$-CBC;
  \item $q$-uniform $(q^2+q,q(q^2+q),q^2,q^2)$-CBC;
  \item $(q-1)$-uniform $(q^2-3,(q-1)(q^2-3),q^2-q-1,q^2-q-1)$-CBC,
\end{itemize}
where $q$ is a prime power.

Along the way, we also obtain the following asymptotically optimal
constructions:
\begin{itemize}
\item
$(q-1)$-uniform  $(q^2, q^3-q^2, q^2-q-1, q^2-q)$-CBC,\\
 such that the gap between the upper bound~(\ref{eq:UpBoundUniform})
and the number of data items $n$ is $q-2 = o(n)$;

\item
$(q-1)$-uniform $(q^2+q-3,(q-1)(q^2+q-3),q^2-q-1,q^2-q)$-CBC,\\
 such that the gap between the upper bound~(\ref{eq:UpBoundUniform})
and the number of data items $n$ is $1$,

\end{itemize}
where $q$ is a prime power.

The rest of the paper is organized  as follows. In Section~\ref{sec:preliminaries} we describe the combinatorial batch codes in terms of set systems and  dual set systems which satisfy Hall's condition, as proposed in~\cite{PSW08}, and define transversal designs and affine planes. In Section~\ref{sec:non-uniform} we present our construction for  new combinatorial batch codes from transversal designs and prove their optimality. In Section~\ref{sec:uniform}
we present our constructions for uniform combinatorial batch codes.

\section{Preliminaries}
\label{sec:preliminaries}

The equivalent definition of combinatorial batch codes in terms of set systems is given as follows~\cite{PSW08}.
An $(n, N, k,m,t)$ combinatorial batch code is a \emph{set system} $(X,\cB)$, where $X$ is a set of $n$ points (corresponding to items), $\cB$ is a collection of $m$ subsets (or blocks) of $X$ (corresponding to servers) and $N=\sum_{B\in \cB}|B|$, such that for each $k$-subset $\{x_{i_1},\ldots, x_{i_k}\}\subset X$ there exists a subset $C_i\subseteq B_i$, where $|C_i|\leq t$, for $1\leq i\leq m$, such that $\{x_{i_1},\ldots, x_{i_k}\}\subset\bigcup_{i=1}^mC_i$.
In the sequel, we will consider combinatorial batch codes with $t=1$, and
we refer to  such  codes as $(n, N, k, m)$-CBCs.

Given a set system $(X,\cB)$ with the points set
$X=\{x_{1},\ldots, x_{n}\}$ and the blocks set $\cB=\{B_1,\ldots,B_m\}$, its incidence matrix is a $m\times n$ binary matrix $A$, where
$$(A)_{i,j}=
\left\{\begin{array}{cc}
                1, & \textmd{ if } x_j\in B_i \\
                0, & \textmd{ if } x_j\notin B_i
              \end{array}\right.
$$

The incidence matrix $\Gamma$ of an $(n, N, k, m)$-CBC is defined as the $m\times n$ incidence matrix (with $N$ ones) of the corresponding set system.
The following lemma~\cite{PSW08} shows the properties of $\Gamma$.

\begin{lemma}
\label{lm:matrixCondition}
An $m\times n$ binary matrix $\Gamma$ with $N$ ones is an incidence matrix of an $(n,N, k,m)$-CBC if and only if  for any $k$ columns there is a $k\times k$ submatrix of $A$ which has at least one generalized diagonal containing $k$ ones.
\end{lemma}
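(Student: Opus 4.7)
The plan is to unwind the defining condition of a CBC and recognize it as a bipartite matching statement, which in matrix language is precisely the existence of a generalized diagonal of $1$'s.

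First I would restate the CBC retrieval requirement. Fix any $k$-subset $\{x_{i_1},\ldots,x_{i_k}\}\subset X$. A valid retrieval provides sets $C_i\subseteq B_i$ with $|C_i|\leq 1$ for each $1\le i\le m$ whose union contains all $k$ items. Since each $C_i$ has at most one element, this is equivalent to choosing $k$ distinct servers $s_1,\ldots,s_k\in[m]$ and a bijection $\pi$ from $\{i_1,\ldots,i_k\}$ to $\{s_1,\ldots,s_k\}$ with $x_{i_j}\in B_{\pi(i_j)}$ for every $j$. In other words, one needs an assignment of the $k$ requested items to $k$ pairwise distinct servers, each server containing the item it is assigned.

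Next, I would translate this into the language of the incidence matrix $\Gamma$. The $j$-th column of $\Gamma$ is the characteristic vector of the set of servers storing $x_j$; the $i$-th row is the characteristic vector of $B_i$. Restricting $\Gamma$ to the $k$ columns indexed by $i_1,\ldots,i_k$ yields an $m\times k$ binary matrix whose $1$-entries are exactly the incidences between the requested items and the servers. An assignment as above is precisely a choice of $k$ distinct rows and a permutation matching rows to columns such that every matched entry equals $1$; equivalently, there exist $k$ rows such that the $k\times k$ submatrix they span together with the chosen columns contains $k$ entries equal to $1$ with one in each row and one in each column, i.e., a generalized diagonal of $1$'s.

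The converse direction is immediate from the same correspondence: given any $k$ columns and a $k\times k$ submatrix of $\Gamma$ containing a generalized diagonal of $1$'s, read off the $k$ rows (servers) and the bijection indicated by that diagonal; setting each $C_i$ to be the single item assigned to server $i$ (and empty otherwise) yields a valid retrieval. Combining both directions, quantifying over all $k$-subsets of columns, gives the stated equivalence.

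There is no serious obstacle here; the argument is essentially a translation between three equivalent viewpoints (retrieval assignments, matchings saturating the $k$ items in the bipartite storage graph, and generalized diagonals of $1$'s in a $k\times k$ submatrix). The only thing to keep straight is the bookkeeping: which side of the bipartite graph corresponds to rows versus columns of $\Gamma$, and the fact that ``generalized diagonal'' in a $k\times k$ matrix means a set of $k$ entries with exactly one in each row and each column, so that a generalized diagonal of $1$'s is nothing but a permutation submatrix contained in $\Gamma$ on the chosen rows and columns.
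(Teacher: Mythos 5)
Your proof is correct; the paper itself states this lemma without proof (it is quoted from Paterson--Stinson--Wei), and your argument is exactly the standard one: with $t=1$ each server contributes at most one item, so retrieving a $k$-set is equivalent to an injective assignment of the $k$ requested items to servers storing them, i.e., a matching saturating the $k$ chosen columns, i.e., a permutation submatrix of $1$'s on those columns and some $k$ rows. The only point worth keeping explicit (which you handle implicitly) is that in the forward direction a nonempty $C_i$ might contain a non-requested item, so one should select, for each requested item, a server whose singleton equals that item --- distinctness of these servers then follows because a singleton can serve only one item.
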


It is useful to represent CBCs by the \emph{dual set system}, where the points correspond to servers and the blocks correspond to items~\cite{PSW08}. Each block (an item) in the dual system contains the points (servers) that store this particular item. In other words, let $\cC$ be an $(n, N, k, m)$-CBC with $n$ items  $x_1,\ldots, x_n$ and $m$ servers $s_1,\ldots, s_m$. $\cC$ is represented by a (dual) set system $(\cS,\cX)$, where $\cS=\{s_1,\ldots, s_m\}$ is the set of $m$ servers
and $\cX=\{X_1,\ldots,X_n\}$ is a collection of $n$ subsets (blocks) of $\cS$. If an item $x_j$, for $1\leq j\leq n$, is stored in servers $s_{i_1},\ldots s_{i_{\ell}}$ then $x_j$ is represented by a subset $X_j$, where $X_j=\{s_{i_1},\ldots s_{i_{\ell}}\}$. Note that it holds that $\sum_{X\in \cX}|X|=N$.

The necessary and sufficient condition for $\cC$ to be a CBC, in terms of dual systems, is given by using Hall's theorem~\cite{BRR11} and is presented  in the following lemma.

\begin{lemma}
\label{lm:matrixConditionAllr}
The necessary and sufficient condition that any set of $k$ items can be retrieved by reading at most one item per server is
 that given any $r$ sets  $X_{i_1},\ldots, X_{i_r}$ of $\cX$ , for all $r$, $1\leq r\leq k$,
in the dual system $(\cS,\cX)$,  it holds that $\cup_{1\leq j\leq r}X_{i_j}\geq r$.
In terms of an incidence matrix $\Gamma$ of a code, it means that
 for any set of $r$ columns of $\Gamma$,  $\{\Gamma_{i_1},\ldots, \Gamma_{i_r}\}$, $1\leq r\leq k$,  it holds that union of these columns (i.e., the characteristic vector of the union of the corresponding blocks of the dual systems) contains at least $r$ nonzero entries of~$\Gamma$.
\end{lemma}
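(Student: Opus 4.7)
The plan is to reduce the retrieval requirement to a bipartite matching question and then apply Hall's marriage theorem. Given any target set of $k$ items $\{x_{j_1},\ldots,x_{j_k}\}$, I would form the bipartite graph $G$ whose two sides are these $k$ items and the $m$ servers, placing an edge between $x_{j}$ and $s_{i}$ precisely when $s_{i}\in X_{j}$. By definition, retrieving the batch while reading at most one item per server is equivalent to finding a matching in $G$ that saturates all $k$ item vertices. Thus the CBC property holds if and only if, for every $k$-subset of items, such a saturating matching exists.

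For the ``only if'' direction I would argue as follows. Suppose $(\cS,\cX)$ defines a CBC and fix any $r$ blocks $X_{i_1},\ldots,X_{i_r}$ with $r\le k$. Extend this set arbitrarily to a $k$-subset of $\cX$ (possible since we may assume $n\ge k$). The CBC property yields a saturating matching for this $k$-subset; restricting the matching to the $r$ chosen items produces $r$ distinct servers lying in $\bigcup_{j=1}^{r} X_{i_j}$, so $\bigl|\bigcup_{j=1}^{r} X_{i_j}\bigr|\ge r$.

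For the ``if'' direction, assume the union condition holds for every $r\le k$. Fix any $k$-subset of items and consider its bipartite graph $G$ as above. To invoke Hall's theorem on $G$, I need to verify the Hall condition for every subset $T$ of the $k$ items, i.e.\ $|N_G(T)|\ge |T|$. But $N_G(T)$ is exactly $\bigcup_{x_j\in T} X_j$ and $|T|\le k$, so the hypothesis supplies this inequality directly. Hall's theorem then yields the required matching, which is the assignment of one item to each relevant server.

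The only genuinely delicate point is matching the quantifiers between ``any batch of $k$ items'' and ``any $r\le k$ blocks of $\cX$,'' and this is handled by the simple extension argument in the forward direction. The incidence-matrix reformulation is then an immediate translation: the union of the columns $\Gamma_{i_1},\ldots,\Gamma_{i_r}$ is the characteristic vector of $\bigcup_j X_{i_j}$, so its Hamming weight coincides with the size of this union, and the condition of the lemma becomes the statement that any $r\le k$ columns have at least $r$ nonzero rows in their OR.
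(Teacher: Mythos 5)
Your proof is correct and follows exactly the route the paper intends: the paper states this lemma as a direct consequence of Hall's marriage theorem (citing \cite{BRR11}) without writing out the details, and your argument is precisely that standard Hall's-theorem derivation, including the correct handling of the quantifier mismatch by extending an $r$-set to a $k$-set using $n\geq m\geq k$.
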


Most of the constructions presented in this paper are based on a family of block designs, called transversal designs. The definition of this designs is as follows.

A \emph{transversal design} (TD) of group size $h$ and block size $\ell$,  denoted by $\text{TD}(\ell, h)$,
is a triple $(\mathcal{P},\mathcal{G},\mathcal{B})$, where
\begin{enumerate}
\item $\mathcal{P}$ is a set of $\ell h$ \emph{points};
\item $\mathcal{G}$ is a partition of $\mathcal{P}$ into $\ell$ sets
(\emph{groups}), each one of size $h$;
\item $\mathcal{B}$ is a collection of $\ell$-subsets of $\mathcal{P}$
(\emph{blocks});
\item each block meets each group in exactly one point;
\item any pair of points from different groups is contained in exactly one block.
\end{enumerate}
It follows from the definition of TD that the number of blocks in $\text{TD}(\ell, h)$ is $h^2$ and the number of blocks that contain a given point is $h$~\cite{Anderson}.
A $\text{TD}(\ell,h)$ is called \emph{resolvable} if the set $\mathcal{B}$
can be partitioned into sets $\mathcal{B}_1,...,\mathcal{B}_h$, each one containing $h$ blocks,
such that each element of $\mathcal{P}$ is contained in exactly one block of
each $\mathcal{B}_i$. The sets $\mathcal{B}_1,...,\mathcal{B}_s$ are called
\emph{parallel classes}.
The existence of  resolvable transversal designs is considered in the following theorem (see e.g. in~\cite{Anderson}).
\begin{theorem}
Let $q$ be a prime power. Then there exists a resolvable $\text{TD}(\ell,q)$ for any integer $\ell \leq q$.
\end{theorem}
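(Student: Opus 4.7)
The plan is to give a direct algebraic construction over the finite field $\mathbb{F}_q$, realising the blocks as ``lines'' with $\ell$ prescribed slopes. Since the hypothesis is $\ell \le q$, we have enough room in $\mathbb{F}_q$ to pick $\ell$ distinct slopes, and this is essentially the only place the bound on $\ell$ is used.

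First I would fix $\mathbb{F}_q$ and choose $\ell$ distinct elements $m_1,\dots,m_\ell \in \mathbb{F}_q$. I take the point set to be $\mathcal{P} = \{1,\dots,\ell\}\times \mathbb{F}_q$ and the groups $\mathcal{G} = \{G_1,\dots,G_\ell\}$ with $G_i = \{i\}\times \mathbb{F}_q$; this immediately gives $|\mathcal{P}|=\ell q$ and a partition into $\ell$ groups of size $q$. For each pair $(a,b)\in \mathbb{F}_q^2$ I define a block
\[
B_{a,b} = \{(i,\, a m_i + b) : 1 \le i \le \ell\},
\]
producing a collection $\mathcal{B}$ of $q^2$ blocks. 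By construction, each $B_{a,b}$ contains precisely one point from each group $G_i$, so properties (1)--(4) of a transversal design hold automatically.

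Next I would verify the pairwise condition. Given two points $(i,x)$ and $(j,y)$ from distinct groups (so $i\neq j$), the conditions $a m_i + b = x$ and $a m_j + b = y$ form a linear system whose matrix has determinant $m_j - m_i \neq 0$; hence $(a,b)$ is uniquely determined, and there is exactly one block of $\mathcal{B}$ containing both points. This is the step that relies on the distinctness of the slopes, and thus on $\ell \le q$.

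Finally, for resolvability I would fix $a \in \mathbb{F}_q$ and set $\mathcal{B}_a = \{B_{a,b} : b \in \mathbb{F}_q\}$. Since the map $b \mapsto a m_i + b$ is a bijection on $\mathbb{F}_q$ for each $i$, every point of every $G_i$ lies in exactly one block of $\mathcal{B}_a$, so $\mathcal{B}_a$ is a parallel class of $q$ blocks. Letting $a$ range over $\mathbb{F}_q$ partitions $\mathcal{B}$ into $q$ parallel classes $\mathcal{B}_1,\dots,\mathcal{B}_q$, establishing resolvability. There is no real obstacle here; the only technical point is checking the pairwise condition, which reduces to the invertibility of a $2\times 2$ Vandermonde-style matrix over $\mathbb{F}_q$.
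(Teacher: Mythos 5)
Your construction is correct, and it fills in a proof that the paper itself does not supply: the paper simply cites this existence result to Anderson's book without argument. Each step checks out — the blocks $B_{a,b}=\{(i,am_i+b):1\le i\le \ell\}$ are pairwise distinct for $\ell\ge 2$, meet each group once, the $2\times 2$ determinant $m_i-m_j\ne 0$ gives the unique block through two points in different groups, and fixing $a$ yields a parallel class since $b\mapsto am_i+b$ is a bijection of $\mathbb{F}_q$. This is the standard construction (equivalently, $\ell$ of the $q+1$ parallel classes of lines of the affine plane $AG(2,q)$, restricted to $\ell$ groups, or a set of $\ell-2$ mutually orthogonal Latin squares of order $q$), and it is consistent with the paper's remark that an affine plane of order $q$ is the same thing as $\text{TD}(q+1,q)$: your slopes $m_1,\dots,m_\ell$ are exactly $\ell$ of the $q$ non-vertical directions, and one could even push to $\ell=q+1$ by adjoining the vertical direction. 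The only minor point worth making explicit is the distinctness of the $q^2$ blocks (so that $\mathcal{B}$ really is a collection of $q^2$ blocks, as required for the block count used elsewhere in the paper), which follows since $a\ne a'$ would force all $m_i$ to coincide.
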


\begin{example}
\label{ex:TD(3,4)}

We consider the resolvable transversal design $\textmd{TD}(3,4)$. The  points $\mathcal{P}=\{1,2,\ldots,12\}$, groups $\mathcal{G}=\{G_1,G_2,G_3\}$ and blocks $\mathcal{B}=\{B_1,B_2,\ldots, B_{16}\}$ with four parallel classes $\mathcal{B}_1,\mathcal{B}_2,\mathcal{B}_3,\mathcal{B}_4$, are given by

$$
\begin{array}{|c|c|c|}\hline
  G_1 & G_2 & G_3  \\\hline
  1 & 5 & 9 \\
  2 & 6 & 10\\
  3 & 7 & 11\\
  4 & 8 & 12\\\hline
\end{array}
$$
$$
\begin{array}{|c|c|c|c|}\hline
  \mathcal{B}_1 & \mathcal{B}_2 & \mathcal{B}_3 & \mathcal{B}_4 \\\hline
  \begin{array}{c|c|c|c}
    B_1 & B_2 & B_3 & B_4 \\\hline
    1 & 2 & 3 & 4 \\
    5 & 6 & 7 & 8 \\
    9 & 10 & 11 & 12
  \end{array}
   & \begin{array}{c|c|c|c}
    B_5 & B_6 & B_7 & B_8 \\\hline
    1 & 2 & 3 & 4 \\
    6 & 5 & 8 & 7 \\
    11 & 12 & 9 & 10
  \end{array} & \begin{array}{c|c|c|c}
    B_9 & B_{10} & B_{11} & B_{12} \\\hline
    1 & 2 & 3 & 4 \\
    8 & 7 & 6 & 5 \\
    10 & 9 & 12 & 11
  \end{array} & \begin{array}{c|c|c|c}
    B_{13} & B_{14} & B_{15} & B_{16} \\\hline
    1 & 2 & 3 & 4 \\
    7 & 8 & 5 & 6 \\
    12 & 11 & 10 & 9
  \end{array}\\\hline
\end{array}
$$

The  transpose of the incidence matrix $A$ of $\text{TD}(3,4)$ is given by the following $12\times 16$ block matrix:
\end{example}
\begin{footnotesize}
$$A^T=\begin{array}{c}
     \left(\begin{tabular}{c|c|c|c}
    1 0 0 0 & 1 0 0 0 & 1 0 0 0 & 1 0 0 0\\
    0 1 0 0 & 0 1 0 0 & 0 1 0 0 & 0 1 0 0\\
    0 0 1 0 & 0 0 1 0 & 0 0 1 0 & 0 0 1 0\\
    0 0 0 1 & 0 0 0 1 & 0 0 0 1 & 0 0 0 1\\
     \hline
    1 0 0 0 & 0 1 0 0 & 0 0 0 1 & 0 0 1 0\\
    0 1 0 0 & 1 0 0 0 & 0 0 1 0 & 0 0 0 1\\
    0 0 1 0 & 0 0 0 1 & 0 1 0 0 & 1 0 0 0\\
    0 0 0 1 & 0 0 1 0 & 1 0 0 0 & 0 1 0 0\\
     \hline
    1 0 0 0 & 0 0 1 0 & 0 1 0 0 & 0 0 0 1\\
    0 1 0 0 & 0 0 0 1 & 1 0 0 0 & 0 0 1 0\\
    0 0 1 0 & 1 0 0 0 & 0 0 0 1 & 0 1 0 0\\
    0 0 0 1 & 0 1 0 0 & 0 0 1 0 & 1 0 0 0\\
         \end{tabular}
\right)
\end{array}$$
\end{footnotesize}

A construction of an optimal uniform code presented in Section~\ref{sec:uniform} is based on an affine plane. The definition of an affine plane, in terms of set systems, is given as follows:

 An \emph{affine plane} of order $s$, denoted by $A(s)$, is  a set system $(X,\mathcal{B})$, where $X$ is a set of $|X|=s^2$ points, $\mathcal{B}$ is a collection of $s$-subsets (blocks) of $X$ of size $|\mathcal{B}|=s(s+1)$, such that each pair of points in $X$ occur together in exactly one
block of~$\mathcal{B}$.
 It follows from the definition, that an affine plane is always \emph{resolvable}, that is the set $\cB$ can be partitioned into $s+1$ sets of size $s$, called parallel classes, such that every element of $X$ is contained in exactly one block of each class.
 The existence of affine planes is considered in the following theorem (see e.g. in~\cite{Anderson}).
\begin{theorem} If $q$ is a prime power, then there exists an affine plane of order $q$.
\end{theorem}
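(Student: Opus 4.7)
The plan is to construct the classical affine plane $AG(2,q)$ directly from the finite field $\F_q$, whose existence for any prime power $q$ is a standard theorem of field theory. I would take the point set $X = \F_q \times \F_q$, which immediately gives $|X| = q^2$, as required.

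For the block set $\cB$, I would use two families of ``lines''. For each pair $(m,b) \in \F_q \times \F_q$, define the non-vertical line $L_{m,b} = \{(x,\, mx+b) : x \in \F_q\}$, and for each $a \in \F_q$, define the vertical line $V_a = \{(a,y) : y \in \F_q\}$. This produces $q^2 + q = q(q+1)$ blocks, each of size $q$, matching the required parameters. The natural partition of these blocks into $q+1$ parallel classes, one class per slope together with the class of vertical lines, also delivers the resolvability mentioned in the paper, although that is not formally needed for the present statement.

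The remaining obligation is to verify that any two distinct points lie in exactly one block. Given $(x_1,y_1) \ne (x_2,y_2)$, I would split on whether $x_1 = x_2$. If so, the unique common block is $V_{x_1}$, since any $L_{m,b}$ contains exactly one point per first coordinate, namely $(a, ma+b)$ over $V_a$. If $x_1 \ne x_2$, no vertical line can contain both, and the system $y_i = m x_i + b$ for $i=1,2$ has the unique solution $m = (y_1 - y_2)(x_1 - x_2)^{-1}$ and $b = y_1 - m x_1$. The inverse $(x_1 - x_2)^{-1}$ is well defined precisely because $\F_q$ is a field, which is exactly the role of the prime power hypothesis on $q$. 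There is no substantive obstacle in this proof: once the construction is written down, the entire verification reduces to elementary linear algebra over $\F_q$, with the one conceptual point being that \emph{field} structure (not merely ring structure) is used to solve the $2 \times 2$ linear system.
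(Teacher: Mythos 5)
Your construction is correct and complete: it is the classical coordinatization of $AG(2,q)$ over $\F_q$, which is precisely the standard argument behind the result the paper cites to Anderson's book without reproducing a proof. The parameter count ($q^2$ points, $q(q+1)$ blocks of size $q$), the two-point incidence verification via the unique solution of the $2\times 2$ linear system, and the observation that resolvability comes for free from the slope classes are all in order.
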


\begin{remark}
Note that an affine plane of order $q$ is equivalent to $\text{TD}(q+1,q)$: the transpose of an incidence matrix of an affine plane of order $q$ is the incidence matrix of $\text{TD}(q+1,q)$.
\end{remark}

\section{Construction of Optimal CBCs from Transversal Designs}
\label{sec:non-uniform}

In this section we present a construction of new optimal combinatorial batch codes.  These new batch codes prove the tightness of bound~(\ref{eq:boundN}) for a new range of parameters. The construction makes use of resolvable transversal designs.

\textbf{Construction I:}
Let $q\geq 3$ be a prime power. Let $ \text{TD}(q)\deff \text{TD}(q-1,q)$ be a resolvable transversal design with block size $q-1$ and group size $q$.
We define the servers of a code to be the  points of $\text{TD}(q)$ and the items of the code to be the union of blocks and groups of $\text{TD}(q)$. We denote the CBC constructed from $\text{TD}(q)$ by $\cC_{\text{TD}}(q)$.

\begin{example}
The  incidence matrix $\Gamma$ of  the $\cC_{\text{TD}}(4)$ obtained from $\text{TD}(3,4)$ from Example~\ref{ex:TD(3,4)} is given by
\end{example}
\begin{footnotesize}
$$\Gamma=\begin{array}{c}
     \left(\begin{tabular}{c|c|c|c|c}
    1 0 0 0 & 1 0 0 0 & 1 0 0 0 & 1 0 0 0 & 1 0 0\\
    0 1 0 0 & 0 1 0 0 & 0 1 0 0 & 0 1 0 0 & 1 0 0\\
    0 0 1 0 & 0 0 1 0 & 0 0 1 0 & 0 0 1 0 & 1 0 0\\
    0 0 0 1 & 0 0 0 1 & 0 0 0 1 & 0 0 0 1 & 1 0 0\\
     \hline
    1 0 0 0 & 0 1 0 0 & 0 0 0 1 & 0 0 1 0 & 0 1 0\\
    0 1 0 0 & 1 0 0 0 & 0 0 1 0 & 0 0 0 1 & 0 1 0\\
    0 0 1 0 & 0 0 0 1 & 0 1 0 0 & 1 0 0 0 & 0 1 0\\
    0 0 0 1 & 0 0 1 0 & 1 0 0 0 & 0 1 0 0 & 0 1 0\\
     \hline
    1 0 0 0 & 0 0 1 0 & 0 1 0 0 & 0 0 0 1 & 0 0 1\\
    0 1 0 0 & 0 0 0 1 & 1 0 0 0 & 0 0 1 0 & 0 0 1\\
    0 0 1 0 & 1 0 0 0 & 0 0 0 1 & 0 1 0 0 & 0 0 1\\
    0 0 0 1 & 0 1 0 0 & 0 0 1 0 & 1 0 0 0 & 0 0 1\\
         \end{tabular}
\right)
\end{array}$$
\end{footnotesize}

Before we analyze the parameters of the constructed CBC, we present the properties of the incidence matrix $\Gamma$ of $\cC_{\textmd{TD}}(q)$ which
will be useful in the proofs.

Let $A$ be a $q^2\times q(q-1)$ incidence matrix of $\text{TD}(q)$, where rows of $A$ correspond to the blocks of $\text{TD}(q)$,
and columns of $A$ correspond to the points of $\text{TD}(q)$. Since $\text{TD}(q)$ is resolvable and all its groups are disjoint by definition, there is a permutation of rows and columns of $A$ which results in
a matrix that consists of $q(q-1)$ permutation matrices, each of size
$q \times q$.
Each  $q\times q$ permutation matrix corresponds to the $q$ points of a group of $\text{TD}(q)$ and $q$ blocks of a parallel class of $\text{TD}(q)$. From now on we assume that $A$ has this form.

Let $G$ be a $(q-1)\times q(q-1)$ matrix where the rows are
the incidence vectors of groups of $\text{TD}(q)$, i.e.,
the $i$th row of $G$ is a binary vector with $q$ 1s in
positions $(i-1)q+j$, $1\leq j\leq q$.
Denote by $\Gamma=(A^T||G^T)$ the $q(q-1)\times (q^2+q-1)$ matrix,
where the first $q^2$ columns are formed by the columns of $A^T$
(incidence vectors of blocks of $\text{TD}(q)$), and the last $q-1$ columns are
formed by the columns of $G^T$ (incidence vectors of groups of $\text{TD}(q)$).

Note that $\Gamma$ has the following structure:
its $q^2 + q - 1$ columns can be partitioned into $q+1$ classes,
where the first $q$ classes contain $q$ columns each, and correspond to the
parallel classes of $\text{TD}(q)$, and the last class contains $q-1$
columns (the incidence vectors of the groups of $\text{TD}(q)$).
We refer to the first $q$ classes as the parallel classes, and the last class
as the {\em special} class.
If not specified, a class of columns can be either one of the parallel classes
or the special class.
Note also that the first $q^2$ columns contain $q-1$ 1s each,
and the last $q-1$  columns (of the special class) contain $q$ 1s each.

We will say that ``\emph{a column $\Gamma_i$ of $\Gamma$ covers
the set $S$ of points}'' if the
block or the group of $\text{TD}(q)$ corresponding  to $\Gamma_i$
contains all the points of $S$.

The following simple observations will be used in the sequel.

\begin{itemize}

\item (A) The columns of $\Gamma$ within a given parallel class  are disjoint,
thus $\ell$ columns of a parallel class cover $\ell (q-1)$ points.

\item (B) Any two columns of $\Gamma$ from different parallel classes intersect in
at most one common point, thus $\ell_1$ columns from a parallel class
together with $\ell_2$ columns from another parallel class cover
at least $\ell_1(q-1) + \ell_2(q-1-\ell_1)$ points.
Note that this is useful when $\ell_1 \leq q-2$.

\item (C) Any column of $\Gamma$ from the special class covers all $q$ points of
one of the groups. On the other hand, $\ell$ columns of a parallel class
cover only $\ell$ points from each group.
Thus, $\ell$ columns of a parallel class together with $x$ columns
of the special class cover at least
$\ell (q-1) + x (q-\ell)$ points.

\end{itemize}

We will also use the following lemmas.

\begin{lemma}
\label{lm:permutation}
Let $P_1$ and $P_2$ be two parallel classes of $\text{TD}(q)$.
There is at most one column (block) from $P_2$  whose points are
covered by  $q-1$ columns of $P_1$.
\end{lemma}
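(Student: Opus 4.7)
The plan is to reformulate the covering condition as a disjointness condition in $\text{TD}(q)$, and then count how many blocks of $P_2$ can avoid a fixed block of $P_1$.

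First, recall that a parallel class $P_1$ of $\text{TD}(q)$ consists of $q$ pairwise disjoint blocks whose union is all of $\cP$. Hence, if $q-1$ of the columns of $P_1$ cover a block $B\in P_2$, then $B$ is entirely disjoint from the unique remaining block $B_0\in P_1$. So it suffices to show that at most one block of $P_2$ is disjoint from $B_0$.

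Next I would use the defining property of a transversal design that any two points from different groups lie in exactly one common block, which implies that any two blocks from different parallel classes share at most one point (this is observation (B) in the excerpt). Thus each block of $P_2$ meets $B_0$ in $0$ or $1$ point. Now each of the $q-1$ points of $B_0$ lies in exactly one block of $P_2$ (since $P_2$ partitions $\cP$), and no two distinct points of $B_0$ can lie in the same block of $P_2$ — otherwise that block of $P_2$ would meet $B_0$ in $\geq 2$ points, contradicting the previous sentence. Hence the $q-1$ points of $B_0$ lie in $q-1$ pairwise distinct blocks of $P_2$.

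Since $|P_2|=q$, this leaves exactly one block of $P_2$ that contains no point of $B_0$, i.e., exactly one block of $P_2$ disjoint from $B_0$. Combined with the reduction in the first paragraph, this yields the desired conclusion that at most one column of $P_2$ is covered by the chosen $q-1$ columns of $P_1$. I do not foresee a real obstacle here; the only subtlety is to notice that the covering hypothesis is equivalent to disjointness from one specific block, after which the counting is immediate from the pairwise-intersection-at-most-one property of blocks from distinct parallel classes.
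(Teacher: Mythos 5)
Your proof is correct and takes essentially the same route as the paper's: both reduce ``covered by $q-1$ blocks of $P_1$'' to ``disjoint from the unique remaining block of $P_1$'' and then exploit the fact that blocks from different parallel classes meet in at most one point. The only difference is that you run the count directly (obtaining the slightly stronger ``exactly one'') whereas the paper argues by contradiction with a pigeonhole step.
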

\begin{proof}
Suppose that $q-1$ blocks of $P_1$ cover two blocks of $P_2$, denoted
by $a$ and $b$.
Let $c$ be the remaining block of $P_1$. Since $c$ has no common
points with $a$ and $b$, $c$ should be covered by the $q-2$ blocks
in $P_2\setminus\{a,b\}$. However, since $c$ contains $q-1$ points,
at least one of the $q-2$ blocks in $P_2\setminus\{a,b\}$ must intersect
 $c$ in at least two points,
which contradicts Property (5) of transversal designs.
\end{proof}
\begin{lemma} \label{td2q}
In $\textmd{TD}(2,q)$, given $2(q-2)$ points covered by $q-2$ blocks from each of two different parallel classes,
there is no further parallel class having $q-2$ blocks that cover these points.
\end{lemma}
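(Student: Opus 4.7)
The plan is a short pigeonhole argument on how the two ``missing'' points in each group must be matched. First, I would recall the structure of $\text{TD}(2,q)$: the $2q$ points split into two groups $G_1, G_2$ of size $q$, every block is a pair with one point from each group, and a parallel class (being a partition of all $2q$ points into $q$ disjoint blocks) is therefore a perfect matching between $G_1$ and $G_2$.

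Second, I would analyze the forced local structure. Let $S$ denote the given set of $2(q-2)$ points. Since the $q-2$ blocks from one parallel class that cover $S$ are pairwise disjoint and each meets each group in exactly one point, $S$ must contain exactly $q-2$ points from $G_1$ and $q-2$ points from $G_2$. Write $G_1 \setminus S = \{a_1, a_2\}$ and $G_2 \setminus S = \{b_1, b_2\}$. For any parallel class $P$ that has $q-2$ of its blocks covering $S$, the remaining two blocks of $P$ must together cover $\{a_1, a_2, b_1, b_2\}$; being a perfect matching of this 4-element set across the two groups, those two blocks are one of only two possibilities, namely $\{\{a_1,b_1\},\{a_2,b_2\}\}$ or $\{\{a_1,b_2\},\{a_2,b_1\}\}$.

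Finally, I would invoke pigeonhole. Suppose for contradiction that three distinct parallel classes $P_1, P_2, P_3$ each have $q-2$ blocks covering $S$. Each of them completes via one of the two matchings above, so by pigeonhole at least two, say $P_i$ and $P_j$, complete in the same way and therefore share both of the extra blocks. But any two distinct parallel classes of a resolvable design are disjoint as sets of blocks, since the resolution partitions the block set. This contradiction finishes the proof.

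There is essentially no main obstacle here: the only nontrivial step is verifying that $S$ must meet each group in exactly $q-2$ points, which follows directly from the axioms of $\text{TD}(2,q)$. The rest is a clean counting argument reducing to a pigeonhole on two possibilities versus three classes.
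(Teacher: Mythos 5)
Your proof is correct and follows essentially the same route as the paper: assume three such classes, observe that the two leftover blocks of each must cover the four points outside $S$, and apply pigeonhole to force a repeated block, contradicting Property (5) of transversal designs. The only cosmetic difference is that you pigeonhole on the two possible perfect matchings of the four leftover points (three classes into two matchings), whereas the paper pigeonholes on the four possible blocks those points can form (six blocks into four pairs); both yield the same contradiction.
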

\begin{proof}
Suppose there are three parallel classes such that $q-2$
blocks from each cover the same $2(q-2)$ points of $\textmd{TD}(2,q)$.
Consider the set of remaining 4 points, denoted by $S_4$.
First note that the 4 points in $S_4$ must be covered by the remaining
two blocks of each of the above three classes.
Next, note that since $S_4$ contains exactly two points from each group,
there are 4 possible blocks that can be formed by the points of $S_4$.
However, there are six blocks (two from each of the three classes) that
must be formed using these 4 points.
Therefore, at least one block must appear at least twice,
which contradicts Property (5) of transversal designs.
\end{proof}

\begin{lemma}
\label{lm:two_colomns}
Let $P_1$ be a parallel class of $\textmd{TD}(y,q)$, $q\geq y\geq 3$.
Then any $2$ blocks of any other parallel class intersect with
any given $2$ blocks in the class $P_1$ in at most $4$ points,
and thus cover $2y-4$ additional  points.
\end{lemma}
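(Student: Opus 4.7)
The plan is to reduce the statement to two standard structural facts about transversal designs and then do a simple inclusion count.

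First I would record two facts about a resolvable $\textmd{TD}(y,q)$ that together imply the lemma almost immediately. (i) Two blocks lying in the same parallel class are disjoint, since the parallel class partitions the point set $\mathcal{P}$. (ii) Two blocks lying in different parallel classes meet in at most one point. This last fact follows from property (5) of transversal designs: if two distinct blocks contained two common points $x,y$, then $x,y$ cannot lie in the same group (every block meets each group in exactly one point), so $x,y$ would be a pair from different groups contained in two distinct blocks, contradicting uniqueness.

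Next I would put the two facts together. Let $B_1, B_2 \in P_1$ and let $B_3, B_4$ be any two blocks from a different parallel class $P_2$. By (ii), each of the four pairs $(B_i, B_j)$ with $i\in\{1,2\}$ and $j\in\{3,4\}$ shares at most one point, so
\[
|(B_3 \cup B_4) \cap (B_1 \cup B_2)| \;\leq\; \sum_{i=1}^{2}\sum_{j=3}^{4} |B_i \cap B_j| \;\leq\; 4.
\]
By (i), $|B_3 \cup B_4| = 2y$, and therefore
\[
\bigl|(B_3 \cup B_4) \setminus (B_1 \cup B_2)\bigr| \;\geq\; 2y - 4,
\]
which is precisely the ``cover $2y-4$ additional points'' claim.

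There is no real obstacle here: the argument is a one-line union bound once the two structural facts (disjointness within a parallel class and pairwise intersection $\leq 1$ across parallel classes) are in hand. The only thing to be careful about is to state explicitly that the bound on the pairwise intersection across parallel classes uses property~(5) of the transversal design (and not just the block-meets-group property), since a sloppy reading could suggest a weaker bound. The hypothesis $y\geq 3$ is not essential for the upper bound itself, but it guarantees that $2y-4\geq 2$ so that the statement is nontrivially useful in the applications that follow.
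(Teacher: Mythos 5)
Your proof is correct and is exactly the argument the paper intends: the paper's proof consists of the single sentence that the claim ``directly follows from Property (5) of transversal designs,'' and your write-up simply makes explicit the pairwise-intersection bound and the disjointness within a parallel class that this shorthand relies on. No gaps.
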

\begin{proof}
The proof directly follows from  Property (5)  of transversal
 designs.
\end{proof}

\begin{lemma}\label{lm:25}
In $\textmd{TD}(q)$ let $P=\{p_1,p_2,p_3\}$
be three points which  are contained in three different groups.
Then, for any set of blocks $R$, there are at most three parallel classes
that contribute $q-2$ blocks each to the set $R$ such that none
of these blocks contain points from $P$.
\end{lemma}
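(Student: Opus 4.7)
The plan is to fix an arbitrary parallel class $P'$ of $\textmd{TD}(q)$ and analyze how the three forbidden points $p_1,p_2,p_3$ are distributed among the $q$ blocks of $P'$. Since a parallel class partitions all $q(q-1)$ points of the design, each of $p_1,p_2,p_3$ lies in exactly one block of $P'$, so the three points occupy either $1$, $2$, or $3$ distinct blocks. Correspondingly, the number of blocks of $P'$ that avoid $P$ is $q-1$, $q-2$, or $q-3$. Hence in order for $P'$ to contribute $q-2$ blocks to $R$ none of which meet $P$, the points of $P$ must occupy at most $2$ blocks of $P'$.

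The next step is to use Property (5) of transversal designs: since $p_1,p_2,p_3$ lie in three different groups, each pair $\{p_i,p_j\}$ is contained in a \emph{unique} block $B_{ij}$. I would split into two cases. If $P$ is contained in a single block of $P'$, then by the uniqueness property $B_{12}=B_{13}=B_{23}$, which pins down a single block and hence a single candidate parallel class; moreover in every other parallel class, any two of $p_1,p_2,p_3$ must lie in distinct blocks (otherwise a second block would contain a pair from $P$), so only one parallel class qualifies in this case. If $P$ occupies exactly two blocks of $P'$, then one of these blocks contains a pair $\{p_i,p_j\}$ and therefore must equal $B_{ij}$; since there are only three pairs, the set of qualifying parallel classes is a subset of those containing $B_{12}$, $B_{13}$, or $B_{23}$.

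Finally, I would note that $B_{12}$, $B_{13}$, $B_{23}$ each lie in a unique parallel class, and these three parallel classes are pairwise distinct: for instance $B_{12}$ and $B_{13}$ both contain $p_1$ and so cannot both belong to the same parallel class, whose blocks are disjoint. Combining the two cases therefore gives at most three qualifying parallel classes, which is the bound asserted by the lemma.

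The argument is essentially a short counting exercise, so there is no substantial obstacle. The one point that deserves care is that the hypothesis ``contributes $q-2$ blocks to $R$'' depends only on the \emph{existence} of $q-2$ $P$-free blocks inside a given parallel class (since $R$ is free to pick any $q-2$ of them); once this is recognized, the structural count based on Property (5) of $\textmd{TD}(q)$ closes the proof.
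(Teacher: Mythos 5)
Your proof is correct and rests on the same two observations as the paper's: a parallel class that contributes $q-2$ blocks avoiding $P$ must have one of its remaining blocks containing at least two points of $P$, and Property (5) of transversal designs makes the block through each of the three pairs $\{p_i,p_j\}$ unique. The paper packages this as a pigeonhole contradiction starting from four qualifying classes, whereas you argue directly that every qualifying class must contain one of $B_{12}$, $B_{13}$, $B_{23}$; this is only a difference in presentation, not in substance.
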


\begin{proof}
Suppose that for some set of blocks $R$, there are four
parallel classes $P_1,P_2,P_3,P_4$ that
contribute $q-2$ blocks each to $R$, such that none of these blocks contain
points from $P$.
Consider the remaining two blocks from each  $P_i$, $1\leq i\leq 4$.
For each $P_i$, $1\leq i\leq 4$,  at least one of these two blocks should
contain at least two points from $P$. Since there are only three different
pairs of points in $P$, there exists at least one pair of points in $P$
that is contained in at least two blocks.
This contradicts Property (5) of transversal designs.
\end{proof}

Now we have all the machinery needed to prove the following theorem.

\begin{theorem}
\label{trm:td-cbc}
The code $\cC_{\text{TD}}(q)$ obtained from $\text{TD}(q)$ is a $(q^2+q-1, q^3-q, q^2-q-1, q^2-q)$-CBC.
\end{theorem}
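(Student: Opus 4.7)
The parameter count ($n=q^2+q-1$, $m=q^2-q$, $N=q^3-q$) is routine: $\Gamma$ has one column per block and one per group of $\text{TD}(q)$, one row per point, and $q^2(q-1)+(q-1)q=q^3-q$ incidences. The heart of the proof is verifying that any $k=q^2-q-1$ items can be retrieved, which by Lemma~\ref{lm:matrixConditionAllr} is equivalent to Hall's condition: for every $r\leq k$ and every set $\cI$ of $r$ columns of $\Gamma$, the corresponding blocks and groups must jointly cover at least $r$ points.

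My plan is a case analysis parameterised by $\ell_1\geq\ell_2\geq\cdots\geq\ell_q$ (the numbers of columns of $\cI$ drawn from the $q$ parallel classes) and $x$ (those drawn from the special class), with $r=x+\sum_i\ell_i$. If $\ell_1=q$, all $m$ points are already covered. If $\ell_1\leq q-2$, Observations~(B) and~(C) yield a lower bound of $\ell_1(q-1)+\sum_{j\geq 2}\ell_j(q-1-\ell_1)+x(q-\ell_1)$ on the covered count, and using $\ell_1\leq q-2$ together with $r\leq q^2-q-1$ a direct computation shows this already exceeds~$r$.

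The delicate case is $\ell_1=q-1$. Here the chosen blocks of $P_1$ cover exactly $(q-1)^2$ points and leave a ``hole'' of $q-1$ points, namely the unique unchosen block $B^\ast$ of $P_1$, which contains exactly one point per group. For any other parallel class $P_j$, exactly one of its $q$ blocks avoids $B^\ast$ entirely while each of the remaining $q-1$ blocks meets $B^\ast$ in a distinct point; so by Lemma~\ref{lm:permutation}, the chosen columns of $P_j$ contribute at least $\ell_j-1$ additional covered points, all lying in $B^\ast$. Each chosen special column in turn covers exactly the unique point of $B^\ast$ in its group, by Observation~(C). The theorem thus reduces to showing that these fresh selections jointly hit at least $r-(q-1)^2$ distinct points of the $(q-1)$-point set~$B^\ast$.

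This final covering problem on $B^\ast$ is the main obstacle. The slack between $r\leq q^2-q-1$ and the total $q^2-q$ points is only one unit, so any double-covering of the holes has to be sharply controlled. I expect Lemmas~\ref{lm:two_colomns} and~\ref{lm:25} to be used precisely here: Lemma~\ref{lm:two_colomns} bounds the maximum overlap when two parallel classes each contribute several near-saturated columns, while Lemma~\ref{lm:25} forbids four or more parallel classes from jointly missing the same three points of $B^\ast$ spread across three groups. Combining these bounds with the special-column contributions over the remaining subcases should yield $T\geq r$ uniformly, completing the verification of Hall's condition.
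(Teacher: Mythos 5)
There is a genuine gap, and it sits exactly where the paper's proof does its hardest work. Your case $\ell_1\leq q-2$ is dispatched with the claimed lower bound $\ell_1(q-1)+\sum_{j\geq 2}\ell_j(q-1-\ell_1)+x(q-\ell_1)$, but this is not a valid lower bound on the number of covered points: Observation (B) controls the overlap between $P_1$ and \emph{one} other class, and Observation (C) the overlap between one class and the special columns, but summing these contributions ignores the overlaps among $P_2,\dots,P_q$ themselves and between those classes and the special columns. Concretely, for $q=5$ with $\ell_1=\cdots=\ell_5=3$ and $x=3$ your formula evaluates to $30$, which exceeds the total number of points $m=20$, so it cannot bound the size of a union of point sets from below. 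Once this formula is discarded, the subcase in which every parallel class contributes exactly $q-2$ columns and $x=q-3$ (so $r=q^2-q-3$ while only $q^2-q$ points exist) is genuinely delicate: two classes leave up to $4$ points uncovered by Lemma~\ref{lm:two_colomns}, and one must argue that further classes pick up at least one more point. This is precisely what the paper's Lemmas~\ref{td2q} and~\ref{lm:25} are for; you have assigned those lemmas to the wrong case, and the case where they are actually needed is covered only by the invalid inequality.

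Your treatment of $\ell_1=q-1$ is, by contrast, essentially sound and is a clean reformulation: the uncovered set is the missing block $B^\ast$ of $P_1$, each other class hits at least $\ell_j-1$ distinct points of $B^\ast$, and each special column hits the unique point of $B^\ast$ in its group; since $r\leq q^2-q-1=(q-1)^2+(q-2)$, at most $q-2$ of the $q-1$ points of $B^\ast$ need to be hit, and this follows from Lemma~\ref{lm:permutation} together with Observation (C), exactly as in the paper's subcase $t=i+1$, $i=q-2$. So this is the comparatively easy boundary case, not ``the main obstacle.'' To repair the argument you must redo $\ell_1\leq q-2$ with honest union bounds: split, as the paper does, on whether some class contributes strictly more than $\lfloor s/q\rfloor$ columns, use only pairwise (class--class or class--special) intersection bounds, and invoke the structural lemmas in the all-classes-nearly-equal subcase.
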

\begin{proof} First, since the number of items for the code is equal to
the number of blocks
plus the number of
groups of $\text{TD}(q)$, it follows  from the definition of TD that $n=q^2+q-1$.
Second, the number of servers is equal to the number of points of $\text{TD}(q)$, and then $m=q(q-1)$.
Since every point in $\text{TD}(q)$ is contained in $q$ blocks and one group, we have $N=q(q-1)(q+1)=q^3-q$.

To prove that  $k=q^2-q-1$, by Lemma~\ref{lm:matrixConditionAllr} we need to show that
\begin{itemize}
\item There exists a set of $q^2-q$ blocks and groups of $\text{TD}(q)$, such that their union contains at most $q^2-q-1$ points, in other words, $k\leq q^2-q-1$;
  \item For any set of $r$ blocks and groups of $\text{TD}(q)$, $1\leq r\leq q^2-q-1$, their union contains at least $r$ points, in other words, $k\geq q^2-q-1$.
  \end{itemize}

Let $p$ be a point of $\text{TD}(q)$. Since $\text{TD}(q)$ is a resolvable transversal design,  there are $q$ parallel classes, each one of size $q$, which partition the set of blocks of $\text{TD}(q)$. From each parallel class of blocks of $\text{TD}(q)$ we take $q-1$ blocks (all the blocks except one) which does not contain $p$. We obtained  $q(q-1)$ different blocks, such that their union does not contain the point $p$, in other words,  their union contains at most $q^2-q-1$ points. Then $k\leq q^2-q-1$.

To show that $k\geq q^2-q-1$, we will prove that any set of $r$ columns of $\Gamma$, $1\leq r\leq q^2-q-1$,  covers at least $r$
points.
Let $R$ be an arbitrary set of $r$ columns of $\Gamma$ with
$r=s + x$, where $s$ is the number of columns of $R$ from the parallel classes, and $x$, $0 \leq x \leq q-1$, is  the number of columns of $R$ which belong to the special class.
We use the notation $s=iq+j$, where $0\leq i\leq q-2$  and
$0\leq j\leq q-1$.
Let $t$ be the maximum number of columns  which is contributed to $R$ by a parallel class. Note that $t\geq i$. We consider the following cases:

\textbf{Case $t\geq i+2$}.
First, if $i=q-2$ then by (A) $i+2=q$ columns of
a parallel class cover $q(q-1)> r$ points.
Then we assume that $i\leq q-3$. By (C), $i+2$ columns from a parallel class
with $x$ columns of the special class cover
at least $(i+2)(q-1)+x(q-i-2)=iq+q+(x+1)(q-i-2)\geq s+x $ points.

\textbf{Case $t=i+1$}. In addition to the parallel class which contributes $i+1$ columns to $R$ there exists at least one parallel class which contributes at least $i$ columns. By
(B), blocks from these two parallel classes cover at least $(i+1)(q-1)+i(q-i-2)=iq+q-1+i(q-i-3)$ points. Then for $i\leq q-4$ and $x\leq i$ we have enough covered points.

On the other hand, $i+1$ columns from one parallel class together with $x$ columns of the special class by (C) cover a least $(i+1)(q-1)+x(q-i-1)=iq+q-1+x(q-i-1)-i$ points. Then for $i\leq q-3$ and $x\geq i$ we have enough covered points.
So we need to consider the following sub-cases:
\begin{itemize}
  \item $i=q-2$. If there are $x=q-1$ columns of the spacial class in $R$, then these columns cover $q(q-1)> r$ points.
  Now assume that $x\leq q-2$. If there are two parallel classes that contribute $q-1$ columns each to $R$, then these columns together cover at least $(q-1)^2+q-2=q^2-q-1\geq r$ points, by Lemma~\ref{lm:permutation}. If there is only one such parallel class
   then the number of selected columns from parallel classes is at most $(q-1)+(q-2)(q-1)=(q-2)q+1$, hence
   $j\leq 1$, and thus $x$ columns from the special class together with $t$ columns  of a parallel class cover at least $(q-1)^2+x=(q-2)q+1+x\geq r$ points, by (C).

  \item $i=q-3$ and $x\leq q-4$. If there are two parallel classes that contribute $t=q-2$ columns each to $R$, then these columns together cover at least $(q-2)(q-1)+2q-6=(q-3)q+2q-4\geq r$ points, by Lemma~\ref{lm:two_colomns}. If there is only one such parallel class then
      the number of selected columns from parallel classes is at most $(q-2)+(q-3)(q-1)=(q-3)q+1$, hence
   $j\leq 1$. Thus $q-2$ columns from the parallel class which contributes $t$ columns to $R$ together with $q-3$ columns from another parallel class cover at least $(q-2)(q-1)+q-3\geq(q-3)q+1+(q-4)\geq s+x$ points, by (B).
\end{itemize}

\textbf{Case $t=i$}. In this case each parallel class contributes exactly $i$ columns, and hence $s=iq$. Any two parallel classes cover at least $i(q-1)+i(q-i-1)=iq+i(q-i-2)$ points, by (B). Then for $x=0$ or  $i\leq q-3$ and $x\leq i$ we have enough covered points. On the other hand,
$i$ columns from one parallel class together with $x$ columns of the special class by (C) cover a least $i(q-1)+x(q-i)=iq+x+x(q-i-1)-i$ points. Then for $i\leq q-2$ and $x\geq i$ we have enough covered points.

So we consider the only remaining sub-case, $i=q-2$ and $1\leq x\leq q-3$. Note that in this case $q\geq 4$, otherwise $x=0$. Therefore, there are at least 4 parallel classes that contribute $q-2$ columns each to $R$.
First we consider two parallel classes that contributes $q-2$ columns each. By Lemma~\ref{lm:two_colomns}, these columns cover at least $(q-2)(q-1)+2(q-1)-4=(q-2)q+q-4$ points. If $x\leq q-4$, then we are done.
Let $x=q-3$. To prove that the columns from two additional parallel classes cover at least one additional point,  we note that $2(q-2)$ columns cover all but $4$ points.
Suppose first that these 4 points are contained in two groups.
Since the columns from the first class cover $q-2$ points in each group,
we have two of the 4 points in both groups.
Then Lemma~\ref{td2q} implies that the remaining parallel classes must
cover at least one of the 4 points, and thus they cover at least one
additional point.
Next consider if these 4 points are distributed between
at least three groups. Since at least $4$ parallel classes contribute $q-2$ columns to $R$,
 Lemma~\ref{lm:25} implies that we do get one additional point from the second
two parallel classes.
Altogether, the columns contributed by the 4 parallel classes cover at
least $(q-2)q+q-4+1= r$ points.

\end{proof}

The following theorem proves the optimality of  $\cC_{\text{TD}}(q)$.

\begin{theorem}
\label{trm:td-cbc-Optimality}
The code $\cC_{\text{TD}}(q)$ is an optimal CBC attaining bound~(\ref{eq:boundN}) with $s=q$.
\end{theorem}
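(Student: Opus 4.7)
The plan is to directly verify bound~(\ref{eq:boundN}) for the parameters of $\cC_{\text{TD}}(q)$ and show that it is achieved with equality when $s=q$. Since Theorem~\ref{trm:td-cbc} already establishes that the code has $N = q^3-q$, it suffices to prove that the right-hand side of~(\ref{eq:boundN}) also equals $q^3-q$ for this choice of parameters, provided $s=q$ is indeed the least integer for which $n \leq (k-1)\binom{m}{s}/\binom{k-1}{s}$.

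First I would plug in $n=q^2+q-1$, $k=q^2-q-1$, $m=q^2-q$ and simplify the key ratio
\[
B_s \;=\; \frac{(k-1)\binom{m}{s}}{\binom{k-1}{s}} \;=\; \frac{(q^2-q-2)\binom{q^2-q}{q}}{\binom{q^2-q-2}{q}}.
\]
Cancelling the common factors in the two binomial coefficients leaves
$\binom{q^2-q}{q}/\binom{q^2-q-2}{q} = (q^2-q)(q^2-q-1)/((q^2-2q)(q^2-2q-1))$,
and after factoring $q^2-q-2=(q+1)(q-2)$ and cancelling $q(q-2)$ one obtains the clean form $B_s = (q^2-1)(q^2-q-1)/(q^2-2q-1)$. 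A short expansion then gives $B_s - n = 2q^2/(q^2-2q-1)$, so that $(k-s)(B_s - n) = (q^2-2q-1)\cdot 2q^2/(q^2-2q-1) = 2q^2$. Since $m-k+1 = 2$ and $ns = q(q^2+q-1) = q^3+q^2-q$, the right-hand side of~(\ref{eq:boundN}) collapses to $q^3+q^2-q - \lfloor 2q^2/2 \rfloor = q^3-q$, which matches the storage of $\cC_{\text{TD}}(q)$ exactly.

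To confirm that $s=q$ is the \emph{least} such index, I would use the recursion $B_{s-1}/B_s = (k-s)/(m-s+1)$, which for $s=q$ yields $B_{q-1} = B_q \cdot (q^2-2q-1)/(q-1)^2 = (q+1)(q^2-q-1)/(q-1)$. Polynomial division shows this equals $q^2+q-1 - 2/(q-1)$, which is strictly less than $n = q^2+q-1$ for every $q \geq 3$. Hence $n > B_{q-1}$, so $s=q$ is indeed minimal, and bound~(\ref{eq:boundN}) applies as claimed.

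I do not expect a genuine obstacle here; the proof is essentially an algebraic verification. The only step requiring a bit of care is the binomial-coefficient simplification leading to the closed form for $B_s$, since a naive approach leaves an unwieldy expression. Factoring $q^2-q-2 = (q+1)(q-2)$ and $q^2-2q = q(q-2)$ early is the cleanest route, after which the matching of $N = q^3-q$ with the lower bound is immediate, proving the optimality of $\cC_{\text{TD}}(q)$.
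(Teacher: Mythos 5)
Your proposal is correct and follows essentially the same route as the paper: verify that $s=q$ is the least index satisfying $n\leq (k-1)\binom{m}{s}/\binom{k-1}{s}$ and then check algebraically that the right-hand side of bound~(\ref{eq:boundN}) equals $N=q^3-q$. Your verification of minimality via the ratio $B_{s-1}/B_s=(k-s)/(m-s+1)$ is in fact more explicit than the paper's ``one can easily verify,'' and your closed form $B_q=(q^2-1)(q^2-q-1)/(q^2-2q-1)$ with $B_q-n=2q^2/(q^2-2q-1)$ checks out.
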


\begin{proof}

First, we prove that the smallest integer $1\leq s\leq k-1$, such that
\begin{equation}
\label{eq:uniformBound}
n\leq \frac{(k-1)\binom{m}{s}}{\binom{k-1}{s}},
\end{equation}
where $n=q^2+q-1$, $k= q^2-q-1$, $m= q^2-q$,
is  $s=q$.
We write~(\ref{eq:uniformBound}) as a function of $q$:
\begin{equation}\label{eq:c-bound}
q^2+q-1\leq \frac{(q^2-q-2)\binom{q^2-q}{s}}{\binom{q^2-q-2}{s}}.
\end{equation}

Note that the  function
$U_{m,k,s}=\frac{(k-1)\binom{m}{s}}{\binom{k-1}{s}}$ is an
increasing function of $s$, for fixed $m$ and $k$. One can easily verify
that for $s=q$ the inequality~(\ref{eq:c-bound}) holds, while for all $s<q$, this inequality does not hold.

Next, we  show that $\cC_{\text{TD}}(q)$ attains bound~(\ref{eq:boundN}).
Note, that $n\leq (k-1)\binom{ m}{k-1}=\frac{m(m-1)(m-2)}{2}$.
 We will prove that  for $s=q$ it holds that
 $$N=ns-\left\lfloor\frac{(k-s)(U_{m,k,s}-n)}{m-k+1}\right\rfloor.$$
  We express the values of $m$, $k$, $s$ as functions of $q$  and obtain
\begin{align*}
&ns-\left\lfloor\frac{(k-s)(U_{m,k,s}-n)}{m-k+1}\right\rfloor \\
&=(q^2+q-1)q-\left\lfloor\frac{2q^3-4q^2}{2(q-2)}\right\rfloor \\
&=(q^2+q-1)q-q^2\\
&=q^3-q\\
&=N.
\end{align*}
\end{proof}

The following theorem establishes  that $\cC_{\text{TD}}(q)$ shows the tightness of bound~(\ref{eq:boundN}) for a new range of parameters.
\begin{theorem}
\label{thm:new_range}
The parameters of $\cC_{\text{TD}}(q)$ satisfy $n\leq \binom{m}{k-2}-(m-k+1)A(m,4,k-3)$, where $A(m, 4, k-3)$ is the maximum cardinality of a constant weight code of length $m$, distance $4$ and constant weight $k-3$.
\end{theorem}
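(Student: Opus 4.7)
The plan is to substitute the explicit parameters of $\cC_{\text{TD}}(q)$, reduce the target inequality to a concrete arithmetic statement, and then verify it using a standard packing bound for constant-weight codes, with the small case $q=3$ handled separately. With $n = q^2+q-1$, $m = q^2-q$, and $k = q^2-q-1$, I would first observe that $m-k+1 = 2$, $k-2 = m-3$, and $k-3 = m-4$. By complementation of constant-weight codewords, $A(m,4,k-3) = A(m,4,4)$, and $\binom{m}{k-2} = \binom{m}{3}$, so the inequality to prove reduces to
\[
q^2 + q - 1 \;\leq\; \binom{q^2-q}{3} \;-\; 2\,A(q^2-q,\,4,\,4).
\]

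The next step is to bound $A(m,4,4)$ by the standard triple-packing argument: any two weight-$4$ codewords at Hamming distance $\geq 4$ share at most two positions, so no triple of $1$-positions can appear in more than one codeword; since each codeword carries $\binom{4}{3}=4$ triples out of $\binom{m}{3}$ total, one obtains $A(m,4,4) \leq \binom{m}{3}/4$. Substituting this estimate, it suffices to verify
\[
q^2 + q - 1 \;\leq\; \tfrac{1}{2}\binom{q^2-q}{3} \;=\; \tfrac{(q^2-q)(q^2-q-1)(q^2-q-2)}{12},
\]
which for $q \geq 4$ follows by direct polynomial comparison (the right-hand side grows like $q^6/12$, dwarfing the quadratic left-hand side).

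The main obstacle is the boundary case $q = 3$. There the packing bound only gives $A(6,4,4) \leq 5$, producing $\binom{6}{3} - 2\cdot 5 = 10 < 11 = n$, so the packing bound alone is insufficient and one must use a tighter estimate. To close this gap I would invoke the exact value $A(6,4,4) = A(6,4,2) = \lfloor 6/2\rfloor = 3$, obtained by complementation together with the elementary fact that weight-$2$ codewords at Hamming distance $\geq 4$ correspond to a matching on six points. This yields $\binom{6}{3} - 2 \cdot 3 = 14 \geq 11$, completing the case $q=3$ and hence the entire argument.
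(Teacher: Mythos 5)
Your proposal is correct and follows essentially the same route as the paper: reduce to $n \leq \binom{m}{3} - 2A(m,4,4)$ via $m-k+1=2$, $k-2=m-3$ and complementation, bound $A(m,4,4)\leq \binom{m}{3}/4$ (the paper cites the iterated Johnson bound of Agrell--Vardy--Zeger, you derive it by the equivalent triple-packing count), conclude $n \leq \binom{m}{3}/2$ for $q\geq 4$, and settle $q=3$ separately using the exact value $A(6,4,2)=3$ giving $11 \leq 14$. Your explicit observation that the crude packing bound genuinely fails at $q=3$ (yielding only $10<11$) is a nice touch that justifies the separate treatment the paper also performs.
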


\begin{proof}
First note that by~\cite{AVZ00},

\begin{align*}
&A(m,4,m-4)=A(m,4,4)\\
&\leq \left\lfloor\frac{m}{4}A(m-1,4,3)\right\rfloor\\
&\leq \left\lfloor\frac{m}{4}\left\lfloor\frac{m-1}{3}A(m-2,4,2)\right\rfloor\right\rfloor\\
&\leq \left\lfloor\frac{m}{4}\left\lfloor\frac{m-1}{3}\left\lfloor\frac{m-2}{2}\right\rfloor\right\rfloor\right\rfloor.
\end{align*}
Then for $k=m-1$ we have
\begin{align}
\label{eq:paramTD}
&\binom{m}{k-2}-(m-k+1)A(m,4,k-3) \notag\\
&=\binom{m}{m-3}-2A(m,4,m-4) \notag\\
&\geq \binom{m}{m-3}-2\left\lfloor\frac{m}{4}\left\lfloor\frac{m-1}{3}\left\lfloor\frac{m-2}{2}\right\rfloor\right\rfloor\right\rfloor.
\end{align}
For $m=q^2-q$ equation~(\ref{eq:paramTD}) is greater than or equal to
$\frac{1}{12}(q^2-q)(q^2-q-1)(q^2-q-2)
$,
 which is larger than $n$ for $q\geq 4$. For $q=3$ we have $A(6,4,2)=3$, then $n=3^2+3-1=11\leq \binom{6}{3}-2\cdot3=14$.

\end{proof}

\section{Constructions of Uniform Combinatorial Batch Codes}
\label{sec:uniform}
In this section we present the constructions of two families of optimal $c$-uniform batch codes, both with $c\sim\sqrt{k}$. The first family of codes is based on affine planes, and the second one is based on transversal designs.

\subsection{Optimal Uniform Combinatorial Batch Codes from Affine Planes}

We present a family of optimal uniform batch codes
attaining the  bound~(\ref{eq:UpBoundUniform}) with $c=\sqrt{k}$.
This construction is based on the incidence matrix of affine planes.

\textbf{Construction II:} Let $q\geq 3$ be a prime power. Let $A(q)$ be an affine plane of order $q$. We define the servers of the code to be the
$q^2$ points of $A(q)$ and the items of the code to be the $q(q+1)$ blocks of $A(q)$. We
denote the uniform CBC constructed from $A(q)$ by $\cC_{A}(q)$.

\begin{remark} Note, that the incidence matrix of $\cC_A(q)$ is equal to the transpose of the incidence matrix of $A(q)$.
\end{remark}

\begin{theorem}
\label{trm:affine_plane}
 $\mathcal{C}_{A(q)}$ is a $q$-uniform $(q^2+q, q^3+q^2, q^2, q^2)$-CBC.
\end{theorem}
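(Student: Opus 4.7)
The parameters $n=q^2+q$, $m=q^2$, $N=q(q^2+q)=q^3+q^2$, and the $q$-uniformity all follow immediately from the definition of $A(q)$, whose $q(q+1)$ blocks each have size $q$ and live on $q^2$ points. The bound $k\le m=q^2$ is trivial. By Lemma~\ref{lm:matrixConditionAllr}, showing $k\ge q^2$ reduces to proving that for every $1\le r\le q^2$, any $r$ blocks of $A(q)$ together cover at least $r$ points.

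Let $U$ denote the union of the $r$ chosen blocks and let $n_i$ count how many of them lie in the $i$-th parallel class, for $i=1,\ldots,q+1$. If some $n_i=q$ then one whole parallel class is chosen and already $|U|=q^2\ge r$; and if $r<q$ then a single block covers $q\ge r$ points. The substantive case is $n_i\le q-1$ for all $i$ (which forces $r\le q^2-1$) together with $q\le r\le q^2-1$; set $a=\lfloor r/q\rfloor\in\{1,\ldots,q-1\}$. The heart of the argument is the structural claim that every unchosen line has at most $q-a$ uncovered points. Given an unchosen line $\ell$ in class $i$ with $k$ uncovered points, for each uncovered $p\in\ell$ and each class $j\ne i$ the unique class-$j$ line through $p$ must be unchosen; because class $j$ partitions the $q^2$ points into $q$ disjoint lines and meets $\ell$ in exactly one point per line, these $k$ class-$j$ lines are all distinct. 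Hence $n_j\le q-k$ for every $j\ne i$, so
\[
r \;\le\; (q-1) + q(q-k),
\]
which combined with $r\ge aq$ yields $k\le q-a$.

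To conclude, I would double-count the pairs $(p,\ell)$ with $p$ uncovered and $\ell$ an unchosen line through $p$: each uncovered point contributes $q+1$ such pairs (all $q+1$ lines through it are unchosen), while each of the $M=q^2+q-r$ unchosen lines contributes at most $q-a$ by the structural claim. This gives $(q+1)|U^c|\le(q-a)M$, which rearranges to $|U|\ge aq+(q-a)r/(q+1)$. Writing $r=aq+b$ with $0\le b\le q-1$ reduces the target inequality $|U|\ge r$ to $aq(q-a)\ge b(a+1)$; since $b\le q-1$ this is implied by the algebraic identity $aq(q-a)-(q-1)(a+1)=(q-a-1)(aq-1)\ge 0$, valid for $1\le a\le q-1$. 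The main obstacle is isolating the structural bound $k\le q-a$; once that is in hand, the rest is a brief algebraic check that exploits the affine-plane structure.
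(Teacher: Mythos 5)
Your proof is correct, but it takes a genuinely different route from the paper's. The paper argues locally: writing $r=i(q+1)+j$ with $1\le j\le q+1$, it splits on the maximum number $t$ of chosen blocks in a single parallel class. If $t\ge i+2$, a single class already covers $(i+2)q>r$ points; otherwise at least $j$ classes contribute exactly $i+1$ blocks, and the fact that two non-parallel lines meet in at most one point yields the required count from just one or two classes. Your argument is global: you first isolate the structural claim that every unchosen line carries at most $q-\lfloor r/q\rfloor$ uncovered points (an uncovered point forces all $q+1$ lines through it to be unchosen, which caps every $n_j$ at $q-k$ and hence $r$ at $(q-1)+q(q-k)$), and then double-count uncovered-point/unchosen-line incidences to get $|U|\ge aq+(q-a)r/(q+1)$, finishing with the identity $aq(q-a)-(q-1)(a+1)=(q-a-1)(aq-1)\ge 0$. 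I checked your edge cases (some $n_i=q$; $r<q$; the tight case $r=q^2-1$, where the chain of inequalities degenerates to equality but still gives $|U|\ge r$) and the algebra; everything holds. The paper's route is shorter and uses only the pairwise-intersection property of two parallel classes, which is why essentially the same case analysis recurs in the transversal-design constructions elsewhere in the paper; your averaging argument is more uniform (no subcases on $j$) and yields a quantitative lower bound on $|U|$ that exceeds $r$ away from the extremes, at the cost of the extra double-counting machinery.
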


\begin{proof}
The parameters $n, N, m, c$ directly follow from the parameters of $A(q)$.

We will prove that $k=q^2$. Obviously, $k\leq q^2$.
We consider a set $R$ of  $r$ blocks, $1\leq r\leq q^2$.
First, if $r=q^2=(q+1)(q-1)+1$, then from the resolvability of $A(q)$, there is a parallel class that contributes $q$ blocks to $R$. Then, these $q$ blocks cover $q^2=r$ points.
Second, we assume that $r=i(q+1)+j=iq+i+j$, for $0\leq i\leq q-2$, $1\leq j\leq q+1$.
If there is a class that contributes $i+2$ blocks, then these blocks cover $(i+2)q=iq+2q>iq+i+j=r$ points. Now we assume that every parallel class contributes  at most $i+1$ blocks to $R$. More precisely, there are at least $j$ classes that contribute $i+1$ blocks and at most $q+1-j$ classes that contribute at most $i$ blocks to $R$. We consider the following cases:
\begin{itemize}
   \item $j=1$. Then, there is a class which contributes $i+1$ blocks and a class which contributes $i$ blocks. Similarly to observation (B) on transversal designs, these blocks cover $(i+1)q+(q-i-1)i\geq iq+q+i>r$ points.
  \item $j\geq 2$. In this case, there are at least two classes which contribute $i+1$ blocks to $R$. Similarly to observation (B) on transversal designs, these blocks cover $(i+1)q+(q-i-1)(i+1)\geq iq+q+i+1=i(q+1)+(q+1)\geq i(q+1)+j=r$ points.
\end{itemize}
\end{proof}

\begin{theorem}
The code $\cC_A(q)$ is an optimal $q$-uniform CBC attaining
bound~(\ref{eq:UpBoundUniform}).
\end{theorem}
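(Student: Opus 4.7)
The plan is to verify directly that the parameters $n = q^2+q$, $m = q^2$, $c = q$, $k = q^2$ of $\cC_{A}(q)$ satisfy bound~(\ref{eq:UpBoundUniform}) with equality, which will immediately establish that no $q$-uniform CBC on $m = q^2$ servers with $k = q^2$ can accommodate more than $q^2+q$ items, hence $\cC_{A}(q)$ attains the maximum value $n(q^2, q, q^2) = q^2+q$.

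First, I would substitute the four parameters into the right-hand side of~(\ref{eq:UpBoundUniform}) to get
\[
\frac{(k-1)\binom{m}{c}}{\binom{k-1}{c}} \;=\; \frac{(q^2-1)\binom{q^2}{q}}{\binom{q^2-1}{q}}.
\]
Next, I would simplify the ratio of binomial coefficients using the standard identity $\binom{q^2}{q} = \frac{q^2}{q^2-q}\binom{q^2-1}{q} = \frac{q}{q-1}\binom{q^2-1}{q}$. Substituting this in yields
\[
\frac{(q^2-1)\binom{q^2}{q}}{\binom{q^2-1}{q}} \;=\; (q^2-1)\cdot\frac{q}{q-1} \;=\; (q-1)(q+1)\cdot\frac{q}{q-1} \;=\; q(q+1) \;=\; q^2+q \;=\; n.
\]

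Thus the upper bound~(\ref{eq:UpBoundUniform}) on the number of items of any $q$-uniform CBC with $m = q^2$ servers and parameter $k = q^2$ is exactly $q^2+q$, which is precisely the number of items stored by $\cC_{A}(q)$ as established in Theorem~\ref{trm:affine_plane}. Therefore $\cC_{A}(q)$ is optimal in the uniform sense. Since the entire argument reduces to a short binomial manipulation, I do not expect any obstacle — the only thing worth emphasizing is that Theorem~\ref{trm:affine_plane} already guarantees that the code actually has the claimed parameters, so no further combinatorial work is needed here beyond the equality check.
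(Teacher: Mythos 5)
Your proposal is correct and follows exactly the same route as the paper: substitute the parameters from Theorem~\ref{trm:affine_plane} into the right-hand side of bound~(\ref{eq:UpBoundUniform}) and simplify the ratio $\binom{q^2}{q}/\binom{q^2-1}{q}$ to $q^2/(q^2-q)$, giving $q(q+1)=n$. No gaps.
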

\begin{proof}
It holds that
$$\frac{(k-1)\binom{m}{c}}{\binom{k-1}{c}}=\frac{(q^2-1)\binom{q^2}{q}}{\binom{q^2-1}{q}}=\frac{(q^2-1)q^2}{q^2-q}=q(q+1)=n.
$$
\end{proof}


\subsection{Optimal Uniform Combinatorial Batch Codes from Transversal Designs}
We  first present  two constructions of asymptotically
optimal uniform codes
based on transversal designs, with $m=k+1$ and $k=c^2+c$, i.e., $c=\frac{\sqrt{4k+1}-1}{2}$
and then  modify these constructions to obtain optimal uniform codes
with $m=k$ and $k = c^2 + c + 1$, i.e., $c=\frac{\sqrt{4k-3}-1}{2}$.

\textbf{Construction III:}
Let $q\geq 3$ be a prime power and let $\text{TD}(q)$ be a resolvable transversal design $\text{TD}(q-1,q)$, as in Section~\ref{sec:non-uniform}. We define
the servers of the code to be the  $q(q-1)$ points of $\text{TD}(q)$  and the set of items of the code to be the $q^2$ blocks of $\text{TD}(q)$. We denote the uniform CBC constructed from $\text{TD}(q)$ by $\cC_1(q)$.

\begin{theorem}
\label{cor:uniform codes}
The code $\cC_1(q)$ is a $(q-1)$-uniform  $(q^2, q^3-q^2, q^2-q-1, q^2-q)$-CBC, such that the gap between the upper bound~(\ref{eq:UpBoundUniform}) and  the number of data items of $\cC_1$ is equal to $q-2$.
 \end{theorem}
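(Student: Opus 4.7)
The plan is to verify each parameter by reading it directly off the transversal design, then establish $k = q^2 - q - 1$ by mirroring the case analysis from Theorem~\ref{trm:td-cbc} in the simpler setting where no ``special class'' columns exist, and finally compute the gap to the uniform bound by a direct algebraic simplification.

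For the parameters $n$, $m$, $c$, $N$: since $\text{TD}(q-1,q)$ has $q(q-1)$ points, $q^2$ blocks each of size $q-1$, and each point lies in exactly $q$ blocks, we immediately get $n=q^2$, $m=q(q-1)=q^2-q$, $c=q-1$, and $N=c\cdot n = q^3-q^2$. These are also the parameters of the incidence matrix $\Gamma'$ of $\cC_1(q)$, which is just the submatrix of the $\Gamma$ from Section~\ref{sec:non-uniform} obtained by deleting the last $q-1$ columns (the special class).

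For the retrieval capacity, I would prove $k=q^2-q-1$ in two directions via Lemma~\ref{lm:matrixConditionAllr}. For $k\le q^2-q-1$: fixing any point $p$ of $\text{TD}(q)$, the $q(q-1)=q^2-q$ blocks not passing through $p$ together cover at most $q^2-q-1$ points, yielding a set of $q^2-q$ columns of $\Gamma'$ whose union falls one short. For $k\ge q^2-q-1$: given an arbitrary set $R$ of $r\le q^2-q-1$ blocks, write $r=iq+j$ with $0\le i\le q-2$ and $0\le j\le q-1$, and let $t$ be the maximum number of columns $R$ takes from any single parallel class. This is precisely the $x=0$ specialization of the argument in Theorem~\ref{trm:td-cbc}: the cases $t\ge i+2$ and $t=i$ go through verbatim using observations (A) and (B); the case $t=i+1$ with $i\le q-3$ also goes through using (B); and the subcase $i=q-2$ is handled by a direct computation when $j=1$ (a single parallel class with $q-1$ blocks already covers $(q-1)^2=r$ points) and by Lemma~\ref{lm:permutation} when $j\ge 2$ (two classes contributing $q-1$ blocks each cover at least $(q-1)^2+q-2=q^2-q-1\ge r$ points). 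Since the Lemmas~\ref{lm:two_colomns}, \ref{lm:25} and the subtle $i=q-3$ arguments from Theorem~\ref{trm:td-cbc} were only needed to absorb the contribution of special-class columns, they are not required here.

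For the gap to the bound~(\ref{eq:UpBoundUniform}), I would compute
\[
\frac{(k-1)\binom{m}{c}}{\binom{k-1}{c}}=\frac{(q^2-q-2)(q^2-q)(q^2-q-1)}{(q^2-2q+1)(q^2-2q)}=\frac{(q+1)(q^2-q-1)}{q-1}=q^2+q-1-\frac{2}{q-1},
\]
after factoring $q^2-q-2=(q-2)(q+1)$ and $q^2-2q=q(q-2)$ and performing the polynomial division of $(q+1)(q^2-q-1)=q^3-2q-1$ by $q-1$. For every prime power $q\ge 3$ we have $0<\tfrac{2}{q-1}\le 1$, so the bound lies in $[q^2+q-2,\,q^2+q-1)$ and its integer floor is $q^2+q-2$; thus the gap to $n=q^2$ is exactly $q-2$.

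The only real obstacle is the case analysis for $k\ge q^2-q-1$, but because $\cC_1(q)$ omits the special class, this reduces to the cleaner $x=0$ portion of the Theorem~\ref{trm:td-cbc} argument and does not require the more delicate Lemmas~\ref{td2q}, \ref{lm:two_colomns}, \ref{lm:25}; everything else is a short computation.
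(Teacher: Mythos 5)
Your proposal is correct and follows essentially the same route as the paper: the gap computation is the identical algebra (the paper writes the fractional part as $\frac{(q-2)(q-3)}{(q-1)(q-2)}$ where you write $1-\frac{2}{q-1}$, both giving floor $q^2+q-2$ and hence gap $q-2$), and your case analysis for $k$ is the correct $x=0$ specialization of Theorem~\ref{trm:td-cbc}. The one thing worth noting is that the paper skips that re-derivation entirely: since the incidence matrix of $\cC_1(q)$ is a column-submatrix of that of $\cC_{\text{TD}}(q)$, the Hall condition of Lemma~\ref{lm:matrixConditionAllr} is inherited automatically, and the witness set of $q^2-q$ blocks avoiding a fixed point uses no group-columns, so $k=q^2-q-1$ follows with no further work --- your longer verification is valid but redundant given the submatrix observation you yourself make.
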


 \begin{proof} The parameters of $\cC_1$ directly follow from Theorem~\ref{trm:td-cbc}, since the incidence matrix of $\cC_1$ is the submatrix of the incidence matrix of the code $\cC_{\textmd{TD}}(q)$ from Theorem~\ref{trm:td-cbc}.

 Since the  the number of data items of a code is an integer number, then we can rewrite bound~(\ref{eq:UpBoundUniform}) as
  $$n(m,c,k)\leq \left\lfloor\frac{(k-1)\binom{m}{c}}{\binom{k-1}{c}}\right\rfloor.
  $$
 Now, given that $m=q^2-q, k=q^2-q-1$, and $c=q-1$ we have

 \[\left\lfloor\frac{(k-1)\binom{m}{c}}{\binom{k-1}{c}}\right\rfloor-n
 =\left\lfloor\frac{(q^2-q-2)\binom{q^2-q}{q-1}}{\binom{q^2-q-2}{q-1}}\right\rfloor-q^2
 =\left\lfloor(q^2+q-2)+\frac{q^2-5q+6}{q^2-3q+2}\right\rfloor-q^2
 =q-2
 \]
 \end{proof}

Next, we modify Construction III  to obtain a uniform batch code $\cC_2$ such that the gap  between bound~(\ref{eq:UpBoundUniform}) and  the number of data items of $\cC_2$ is equal to $1$. We present the construction in terms of the incidence
matrix for the code.

\textbf{Construction IV:}
Let the columns of the matrix $\Gamma$ be the union of the columns
of the incidence matrix of $\cC_1(q)$  and
 $q-3$ columns of weight $q-1$, where the $q-1$ \emph{ones} of a new column $i$, $1\leq i\leq q-3$, are in positions $(i-1)q+j$, $2\leq j\leq q$.
 We denote the uniform CBC with incidence matrix $\Gamma$
by $\cC_2(q)$.

\begin{example} The  incidence matrix of the uniform code $\cC_2(4)$ obtained from Construction IV with $q=4$ is given by
\end{example}

\begin{footnotesize}
$$\Gamma=\begin{array}{c}
     \left(\begin{tabular}{c|c|c|c|c}
    1 0 0 0 & 1 0 0 0 & 1 0 0 0 & 1 0 0 0 & 0\\
    0 1 0 0 & 0 1 0 0 & 0 1 0 0 & 0 1 0 0 & 1\\
    0 0 1 0 & 0 0 1 0 & 0 0 1 0 & 0 0 1 0 & 1\\
    0 0 0 1 & 0 0 0 1 & 0 0 0 1 & 0 0 0 1 & 1\\
     \hline
    1 0 0 0 & 0 1 0 0 & 0 0 0 1 & 0 0 1 0 &  0\\
    0 1 0 0 & 1 0 0 0 & 0 0 1 0 & 0 0 0 1 &  0\\
    0 0 1 0 & 0 0 0 1 & 0 1 0 0 & 1 0 0 0 &  0\\
    0 0 0 1 & 0 0 1 0 & 1 0 0 0 & 0 1 0 0 &  0\\
     \hline
    1 0 0 0 & 0 0 1 0 & 0 1 0 0 & 0 0 0 1 & 0 \\
    0 1 0 0 & 0 0 0 1 & 1 0 0 0 & 0 0 1 0 & 0 \\
    0 0 1 0 & 1 0 0 0 & 0 0 0 1 & 0 1 0 0 & 0 \\
    0 0 0 1 & 0 1 0 0 & 0 0 1 0 & 1 0 0 0 & 0 \\
         \end{tabular}
\right)
\end{array}$$
\end{footnotesize}

  \begin{theorem}
 \label{lm:almost optimal uniform}
  The code $\cC_2(q)$ is a $(q-1)$-uniform  $(q^2+q-3, (q-1)(q^2+q-3), q^2-q-1, q^2-q)$-CBC, which almost attains bound~(\ref{eq:UpBoundUniform}), i.e. the difference between the bound and  the number of data items of the code is 1.
 \end{theorem}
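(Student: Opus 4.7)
The plan is to verify the stated parameters, deduce the gap to bound~(\ref{eq:UpBoundUniform}) from the computation already carried out in Theorem~\ref{cor:uniform codes}, and then prove that Hall's condition holds for $k=q^2-q-1$ by a case analysis modelled on the proof of Theorem~\ref{trm:td-cbc}. The basic counts read off the construction: the incidence matrix $\Gamma$ consists of the $q^2$ block columns of $\cC_1(q)$ together with the $q-3$ new columns of weight $q-1$, giving $n=q^2+q-3$; the server set is unchanged, so $m=q^2-q$; every column has weight $q-1$, so the code is $(q-1)$-uniform with $N=(q-1)(q^2+q-3)$. The arithmetic in the proof of Theorem~\ref{cor:uniform codes} shows that $\lfloor (k-1)\binom{m}{c}/\binom{k-1}{c}\rfloor = q^2+q-2$, so the gap to bound~(\ref{eq:UpBoundUniform}) equals $(q^2+q-2)-(q^2+q-3)=1$. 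The upper bound $k\leq q^2-q-1$ is inherited: the incidence matrix of $\cC_1(q)$ is a submatrix of $\Gamma$, so the $q^2-q$ columns that witness $k\leq q^2-q-1$ for $\cC_1(q)$ (via Theorem~\ref{trm:td-cbc}) also witness it for $\cC_2(q)$.

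It remains to show that every set $R$ of $r\leq q^2-q-1$ columns of $\Gamma$ covers at least $r$ points. Following the notation of Theorem~\ref{trm:td-cbc}, I will write $r=s+y$, where $s$ counts columns from the parallel classes of $\text{TD}(q)$ and $y\leq q-3$ counts new (truncated-group) columns; further write $s=iq+j$ with $0\leq j\leq q-1$; and let $t$ denote the maximum number of columns contributed to $R$ by a single parallel class, so that $t\geq i$. Observations~(A) and~(B) of Section~\ref{sec:non-uniform} apply verbatim to the block columns in $R$, and any two new columns are disjoint because they correspond to distinct groups of $\text{TD}(q)$. The analog of observation~(C) that I will use is that $\ell$ columns from a single parallel class together with the $y$ new columns cover at least $\ell(q-1)+y(q-1-\ell)$ points for $0\leq\ell\leq q-1$, since a parallel class meets each group in $\ell$ points while each new column covers $q-1$ points of its group. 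This is weaker than observation~(C) by precisely $y$, reflecting the fact that each new column is a full group with one point removed.

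The proof then splits into the same three cases as Theorem~\ref{trm:td-cbc} ($t\geq i+2$, $t=i+1$, $t=i$) and checks the coverage inequality in each sub-case; for most ranges of $(i,j,y)$ the analog of~(C) alone suffices once one substitutes the bound $r\leq q^2-q-1$. The main obstacle will be the sub-cases in which $\ell$ is close to $q-1$ and the new columns therefore contribute only $y(q-1-\ell)\in\{0,y\}$ additional points, in particular the configurations $t=q-1$ with $i=q-3$ and $y$ large, and $t=i+1$ with $i\in\{q-3,q-2\}$. These will be handled using Lemma~\ref{lm:permutation} (two parallel classes each contributing $q-1$ columns cover at least $(q-1)^2+(q-2)=q^2-q-1$ points, which already equals the maximum admissible $r$), together with Lemmas~\ref{td2q}, \ref{lm:two_colomns} and~\ref{lm:25} in the remaining configurations, exactly as in the proof of Theorem~\ref{trm:td-cbc}, and when needed the observation that a new column whose group is not saturated by the block part of $R$ must contribute at least one additional point.
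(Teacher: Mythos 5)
Your proposal is correct and follows essentially the same route as the paper's proof: the parameters and the gap of $1$ are read off exactly as in Theorem~\ref{cor:uniform codes}, and the Hall-condition argument is the same delta off Theorem~\ref{trm:td-cbc}, with the key modification being that the special columns now contribute only $y(q-1-\ell)$ rather than $x(q-\ell)$ extra points. You correctly isolate the sub-cases where this weakening bites (a class contributing close to $q-1$ columns, e.g.\ $t=q-1$ with $i=q-3$, and $i\in\{q-3,q-2\}$ in the $t=i+1$ and $t=i$ cases) and name the same tools --- Lemma~\ref{lm:permutation} applied to two parallel classes, plus Lemmas~\ref{td2q}, \ref{lm:two_colomns}, \ref{lm:25} --- that the paper uses to close them.
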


 \begin{proof}
 First, the parameters $n$, $N$, $m$, and  $c$ directly follow from the definition of the code.
 Hence, we only need to prove that the parameter $k$ of this construction is  the same as in Construction III, i.e. $k=q(q-1)-1$.
 The proof is similar to the proof of Theorem~\ref{trm:td-cbc}, and we use the same notations:  Let $\Gamma$ be the  incidence matrix of $\cC_2(q)$, where the first $q^2$ columns are partitioned into $q$ parts corresponding to parallel classes of $\textmd{TD}(q)$ and the last $q-3$ columns are the additional columns of weight $q-1$ each, which we will refer as the \emph{special} class.
  As in the proof of Theorem~\ref{trm:td-cbc}, we consider an arbitrary set $R$ of $r$ columns of $\Gamma$ where
$1\leq r\leq q(q-1)-1$ and will prove that the columns in $R$  cover at least $r$
points. We assume that
$r=s + x$, where $s$ is the number of columns of $R$ from the parallel classes, and $x$, $0 \leq x \leq q-3$, is  the number of columns of $R$ which belong to the special class.
We use the same notations for $s$
as in the proof of Theorem~\ref{trm:td-cbc}, $s=iq+j$,
for $0\leq i\leq q-2$,
$0\leq j\leq q-1$. Similarly, $t\geq i$ is the maximum number of columns  which is contributed to $R$ by a parallel class.

We consider the same cases for $t$ as in the proof of Theorem~\ref{trm:td-cbc}, and present the details only when the proofs are different.

\textbf{Case $t\geq i+2$}. We assume here that $i+2\leq q-1$, otherwise $q$ columns of a parallel class cover $q(q-1)$ points.  Note that there is at least one additional parallel class which contributes at least $i$ columns (otherwise the total number of selected columns from the parallel classes is at most $i(q-1)$), therefore, $i+2$ columns from one class and $i$ columns from another class cover at least
 $(i+2)(q-1)+i-1\geq iq+(q-1)+(q-3)\geq iq+j+x=r$ points, by Lemma~\ref{lm:permutation}.

\textbf{Case $t=i+1$}. The only difference to the corresponding case of the proof of Theorem~\ref{trm:td-cbc} is when considering $t=i+1$ columns from a parallel class and $x$ columns from the special class, these columns cover at least $(i+1)(q-1)+x(q-1-i-1)=iq+(q-1)+x(q-i-2)-i$, then for $i\leq q-4$ and $i\leq x$ we are done.  The additional case $i=q-2$ with two parallel classes that contribute $q-1$ columns to $R$ and the case $i=q-3$ are identical to the proof of the corresponding cases in Theorem~\ref{trm:td-cbc}. The case $i=q-2$ with only one parallel class which contributes $q-1$ columns to $R$ corresponds to the case when $j\leq 1$, as in the proof of Theorem~\ref{trm:td-cbc}, and then two parallel classes that contribute $q-1$ and $q-2$ columns to $R$, respectively, together cover by Lemma~\ref{lm:permutation} at least  $(q-1)^2+q-3=(q-2)q+1+(q-3)\geq s+x$ points.

\textbf{Case $t=i$}. The only difference to the corresponding case of  Theorem~\ref{trm:td-cbc} is when considering $t=i$ columns from a parallel class and $x$ columns from the special class, these columns cover at least $i(q-1)+x(q-1-i)=iq+(q-1)+x(q-i-1)-i$, then for $i\leq q-3$ and $i\leq x$ we are done. Then the additional case is when $i=q-2$ (and $x\leq q-3$, by definition) which is identical to the proof of the corresponding case in Theorem~\ref{trm:td-cbc}.

Finally, similarly to the proof of Theorem~\ref{cor:uniform codes}, we have
 $$
 \left\lfloor\frac{(k-1)\binom{m}{c}}{\binom{k-1}{c}}\right\rfloor-n=q^2+q-2-(q^2+q-3)=1.
 $$

\end{proof}

 Now we modify Construction IV to obtain an optimal uniform code,
with parameters that are different from the parameters of
the affine plane
based code of Construction II.

 \textbf{Construction V:}
 Let $\Gamma$ be the incidence matrix of a code, which is obtained by removing the first row (which corresponds to a server) and all the columns in the set $\{(i-1)q+1: 1\leq i\leq q\}$, (the columns which correspond to the items of the removed server) from the incidence matrix of $\cC_2(q)$.
  We denote the resulting uniform CBC by $\cC_3(q)$.

  \begin{example} The incidence matrix of the uniform code $\cC_3(4)$  is given by
  \end{example}

\begin{footnotesize}
$$\Gamma=\begin{array}{c}
     \left(\begin{tabular}{c|c|c|c|c}
     1 0 0 &  1 0 0 &  1 0 0 &  1 0 0 & 1\\
     0 1 0 &  0 1 0 &  0 1 0 &  0 1 0 & 1\\
     0 0 1 &  0 0 1 &  0 0 1 &  0 0 1 & 1\\
     \hline
     0 0 0 &  1 0 0 &  0 0 1 &  0 1 0 &  0\\
     1 0 0 &  0 0 0 &  0 1 0 &  0 0 1 &  0\\
     0 1 0 &  0 0 1 &  1 0 0 &  0 0 0 &  0\\
     0 0 1 &  0 1 0 &  0 0 0 &  1 0 0 &  0\\
     \hline
     0 0 0 &  0 1 0 &  1 0 0 &  0 0 1 & 0 \\
     1 0 0 &  0 0 1 &  0 0 0 &  0 1 0 & 0 \\
     0 1 0 &  0 0 0 &  0 0 1 &  1 0 0 & 0 \\
     0 0 1 &  1 0 0 &  0 1 0 &  0 0 0 & 0 \\
         \end{tabular}
\right)
\end{array}$$
\end{footnotesize}

  \begin{theorem} The code $\cC_3(q)$ is an optimal $(q-1)$-uniform  $(q^2-3, (q-1)(q^2-3), q^2-q-1, q^2-q-1)$-CBC, attaining bound~(\ref{eq:UpBoundUniform}).
  \end{theorem}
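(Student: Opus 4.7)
The plan is to establish three items: (i) the parameters $n=q^2-3$, $N=(q-1)(q^2-3)$, $m=q^2-q-1$, $c=q-1$ of $\cC_3(q)$; (ii) Hall's condition (Lemma~\ref{lm:matrixConditionAllr}) for $k=q^2-q-1$; and (iii) the match with bound~(\ref{eq:UpBoundUniform}). For (i), observe that the first row of the incidence matrix of $\cC_2(q)$ has exactly $q$ ones, located in the columns $(i-1)q+1$ for $1\leq i\leq q$ (the leading column of each parallel class) and zero in every special column; these are precisely the $q$ columns removed in Construction V. After the removal, each of the $q$ parallel classes retains $q-1$ columns, the $q-3$ special columns are unchanged, and every column still has weight $q-1$. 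Disjointness within a parallel class and at-most-one-point intersection across classes are inherited from the underlying $\text{TD}(q)$.

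For (ii), I would take an arbitrary $R$ of $r\leq q^2-q-1$ columns and decompose $r=s+x$ with $s$ columns from the parallel classes and $0\leq x\leq q-3$ from the special class. Since $s\leq q^2-q-1$, write $s=iq+j$ with $0\leq j\leq q-1$ and $0\leq i\leq q-2$, and let $t$ be the maximum contribution of any single parallel class to $R$. The strategy is to run the case split $t\in\{i,\, i+1,\, \geq i+2\}$ exactly as in the proofs of Theorems~\ref{trm:td-cbc} and~\ref{lm:almost optimal uniform}, using observations (A)--(C) together with Lemmas~\ref{lm:permutation}, \ref{td2q}, \ref{lm:two_colomns}, and~\ref{lm:25} to lower-bound the coverage by $s+x=r$. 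The structural differences from the proof of Theorem~\ref{lm:almost optimal uniform} are mild: each parallel class now has at most $q-1$ columns, so $t\leq q-1$ and the case $t=q$ does not arise, while two fully selected parallel classes already cover all $(q-1)^2+(q-2)=q^2-q-1$ remaining points by Lemma~\ref{lm:permutation}. The special-class estimates are unchanged, since none of the special columns were altered by the removals.

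The main obstacle is the extremal regime where $r$ is close to $m=q^2-q-1$, which forces $s$ close to $q^2-q$ and pushes $(i,j)$ towards $(q-2,q-1)$; here $R$ must cover essentially every remaining server, and I expect the sub-cases (typically $t=i+1$, $i=q-2$, or $t=i$, $i=q-2$ with $x\in\{0,\ldots,q-3\}$) to reduce to applying Lemma~\ref{lm:permutation} together with the two-full-classes-cover-all observation above. Finally, for (iii), a direct computation yields
\[
\frac{(k-1)\binom{m}{c}}{\binom{k-1}{c}}
=\frac{(q^2-q-2)\binom{q^2-q-1}{q-1}}{\binom{q^2-q-2}{q-1}}
=\frac{(q^2-q-2)(q^2-q-1)}{q^2-2q}
=q^2-3+\frac{q-1}{q},
\]
whose integer floor equals $n=q^2-3$, so $\cC_3(q)$ attains bound~(\ref{eq:UpBoundUniform}).
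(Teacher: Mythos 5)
Your parameter count in (i) and the floor computation in (iii) are both correct: with $m=k=q^2-q-1$ and $c=q-1$ the bound evaluates to $\frac{(q^2-q-2)(q^2-q-1)}{q^2-2q}=q^2-2-\frac{1}{q}$, whose floor is $q^2-3=n$. The problem is part (ii): you never actually prove Hall's condition. You announce that the case analysis of Theorems~\ref{trm:td-cbc} and~\ref{lm:almost optimal uniform} should be rerun, and you explicitly leave the extremal regime with $(i,j)$ near $(q-2,q-1)$ as an ``obstacle'' whose sub-cases you only ``expect'' to close. As written this is a plan, not a proof; those sub-cases were precisely the delicate ones in Theorem~\ref{trm:td-cbc} (they required Lemmas~\ref{td2q} and~\ref{lm:25}, not just Lemma~\ref{lm:permutation}), and each would have to be re-verified in the modified geometry where one group has only $q-1$ points and every parallel class only $q-1$ blocks.

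The missing idea is that none of this case analysis is needed. The structural fact you already record in part (i) --- the $q$ deleted columns are exactly the columns of $\cC_2(q)$ carrying a $1$ in the deleted row, so every surviving column has a $0$ there --- yields an immediate reduction, and this is how the paper argues. Take any $r\leq q^2-q-1$ columns of the incidence matrix of $\cC_3(q)$ and prepend a $0$ to each; you recover $r$ columns of the incidence matrix of $\cC_2(q)$, which by Theorem~\ref{lm:almost optimal uniform} cover at least $r$ points, and none of the covered points can be the deleted first server since all these columns vanish there. Hence the original $r$ columns already cover at least $r$ of the remaining $q^2-q-1$ servers, giving $k=q^2-q-1$ in one line. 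You had the enabling observation in hand but did not use it; instead you committed to an unfinished re-derivation, which is where the proof attempt falls short.
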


 \begin{proof}
First, the parameters $n$, $N$, $m$, and  $c$ directly follow from the definition of the code.
Then, we need to prove that the parameter $k$ of Construction V is  the same as in Construction IV.
  Consider any  set $R$ of $r$ columns of $\Gamma$, $1\leq r\leq q^2-q-1$.  We expand every column in $R$ by adding a zero in the first position. These expanded columns are the columns of the incidence matrix of the uniform code $\cC_2(q)$, which cover $r$ points, not including the first point. Thus the original columns from $R$ cover $r$ points.

 Next we prove that the code has  the optimal  number of data items.
 Given that $m=k=q^2-q-1$ and $c=q-1$ we have
 $$\left\lfloor\frac{(k-1)\binom{k}{c}}{\binom{k-1}{c}}\right\rfloor
 =\left\lfloor\frac{(q^2-3)(q^2-2q)+q^2-3q+2}{q^2-2q}\right\rfloor=q^2-3=n.
 $$
 \end{proof}

\section*{Acknowledgment}
The authors thank the anonymous
referees for their valuable comments that helped to improve the
presentation of the paper.


\end{document}